\newcommand{\C}{\mathbb{C}}
\newcommand{\R}{\mathbb{R}}
\newcommand{\Z}{\mathbb{Z}}
\newcommand{\sL}[2]{{\mathfrak{s l}(2, \mathbb{C})}}
\newcommand{\bigO}{\mathcal{O}}
\newcommand{\1}[1]{\mathds{1}}
\renewcommand{\ket}[1]{| #1 \rangle}
\renewcommand{\bra}[1]{\langle #1 |}
\renewcommand{\braket}[2]{\langle #1 | #2 \rangle}
\renewcommand{\ketbra}[2]{\ket{#1}\bra{#2}}
\newcommand{\mathbbm}[1]{\text{\usefont{U}{bbm}{m}{n}#1}}
\newcommand{\id}{\mathbbm{1}}
\def\qdots{\ \vdots\ }
\newtheorem{theorem}{Theorem}
\newtheorem{algorithm}[theorem]{Algorithm}
\newtheorem{problem}[theorem]{Problem}
\newtheorem{claim}[theorem]{Claim}
\newtheorem{definition}[theorem]{Definition}
\newtheorem{observation}[theorem]{Observation}
\begin{document}

\title{Quantum Signal Processing, Phase Extraction, and Proportional Sampling}

\author{Lorenzo Laneve}
\email{lorenzo.laneve@usi.ch}
\affiliation{Faculty of Informatics — Universit\`a della Svizzera Italiana, 6900 Lugano, Switzerland}


\begin{abstract}
\noindent Quantum Signal Processing (QSP) is a technique that can be used to implement a polynomial transformation $P(x)$ applied to the eigenvalues of a unitary $U$, essentially implementing the operation $P(U)$, provided that $P$ satisfies some conditions that are easy to satisfy. A rich class of previously known quantum algorithms were shown to be derived or reduced to this technique or one of its extensions. In this work, we show that QSP can be used to tackle a new problem, which we call \emph{phase extraction}, and that this can be used to provide quantum speed-up for \emph{proportional sampling}, a problem of interest in machine-learning applications and quantum state preparation. We show that, for certain sampling distributions, our algorithm provides an almost-quadratic speed-up over classical sampling procedures. Then we extend the result by constructing a sequence of algorithms that increasingly relax the dependence on the space of elements to sample.
\end{abstract}
\maketitle

\twocolumngrid 

\section{Introduction}
Quantum Signal Processing (QSP) is a novel technique to devise quantum algorithms developed by Low, Yoder, and Chuang~\cite{lowMethodologyResonantEquiangular2016, lowOptimalHamiltonianSimulation2017, lowHamiltonianSimulationQubitization2019} where, roughly speaking, one can transform some entry of a unitary matrix using a polynomial $P(x)$, i.e.,
\begin{align*}
    \bra{0} U \ket{0} = a \Rightarrow \bra{0} Q(U) \ket{0} = P(a),
\end{align*}
where $Q$ is a construction of a circuit that calls $U$ $n$~times if $n$ is the degree of $P$. Low et al.~\cite{lowMethodologyResonantEquiangular2016} proved that the class of polynomials realizable using their construction is quite general. Originally, this technique was used to tackle the Hamiltonian-simulation problem, where nearly-optimal complexities could be achieved for a huge class of Hamiltonians~\cite{lowHamiltonianSimulationUniform2017, lowHamiltonianSimulationQubitization2019}, even requiring a single copy of the initial state (so-called fully coherent simulation~\cite{martynEfficientFullyCoherentQuantum2022}). The QSP construction was extended to two similar techniques: the first is called Quantum Eigenvalue Transform (QET), where the polynomial transformation $P(x)$ induced by the QSP could be applied to all the eigenvalues of an arbitrary block-encoded square matrix $A$ (essentially implementing $P(A)$) by using, again, $n$ calls to $U$~\cite{lowHamiltonianSimulationUniform2017, haahProductDecompositionPeriodic2019}. The second, introduced by Gilyen~\cite{gilyenQuantumSingularValue2019a, gilyenQuantumSingularValue2019c}, was about transforming the \emph{singular values} of a block-encoded matrix $A$. The latter extension, which is the most general one, was also shown to reproduce most of famous quantum algorithms present in the literature, such as Grover's and Shor's algorithms~\cite{groverFastQuantumMechanical1996a, shorPolynomialTimeAlgorithmsPrime1997}, but also the HHL algorithm for solving linear systems~\cite{harrowQuantumAlgorithmSolving2009}, general amplitude-amplification schemes and the phase-estimation procedure~\cite{nielsenQuantumComputationQuantum2010b}. Moreover, works by Haah~\cite{haahProductDecompositionPeriodic2019} and Chao et al.~\cite{chaoFindingAnglesQuantum2020}, which had the goal to overcome some technical problems of the QSP (namely, classical computation of the phase sequence needed in the construction), introduced novel formalisms that are relevant for the present work (see Appendix~\ref{apx:haah-construction} for a brief introduction).

Starting from Section~\ref{sec:phase-extraction-problem}, we present a different problem, called \emph{phase extraction}, in which one basically wants to construct a (block-encoded) Hermitian matrix $H$ (or some real function of it) using copies of $U = e^{i \pi H}$ and its inverse or, in other words, to transform eigenphases to (real) amplitudes. We show a general construction, using QET and Fourier series, to approximate any (sufficiently smooth) function of the phases with Laurent polynomials that can be directly implemented with Haah's formalism~\cite{haahProductDecompositionPeriodic2019}.

In Section~\ref{sec:prop-sampling} we proceed by showing a natural application of the phase extraction problem to \emph{proportional sampling}, a problem of interest in machine learning where one wants to sample some element $x$ of a given set with a probability that is proportional to the value given by some oracle $c(x)$. We devise an algorithm using the phase-extraction procedure to construct a quantum state where the amplitude of each element is the square root of its sampling probability, so that a measurement in the computational basis will induce our desired probability distribution. If we omit the measurement, we can consider this algorithm as a way to prepare a quantum state with a specific shape defined through the oracle, and this can be relevant in the topic of quantum state preparation. We close this section by proving that, in some instances, the algorithm achieves an almost quadratic speed-up.

As a final step, in Section~\ref{sec:inductive-smoothening} we show how further speed-up can be achieved `for free' if we carefully choose the function to approximate in the phase-extraction subroutine. In its final form, our algorithm does not (directly) depend on the dimensionality $N$ of our Hilbert space. In contrast to previous algorithms for quantum state preparation~\cite{knillApproximationQuantumCircuits1995, mottonenTransformationQuantumStates2004a, araujoDivideandconquerAlgorithmQuantum2021, zhangQuantumStatePreparation2022}, which require $\Theta(N)$ depth or width, our proposed solution may take as low as $\bigO(\log N)$ depth for some instances, and always $\bigO(\log N)$ qubits.
\section{Phase Extraction}
\label{sec:phase-extraction-problem}

In this section we define a new problem, which we call \emph{phase extraction}. In some sense, this problem can be seen as the inverse of the simulation problem: while, in order to simulate an Hamiltonian, we need to construct the complex exponential of the given matrix, in the phase extraction problem we essentially want to extract its complex logarithm.

\begin{problem}[Phase Extraction]
    \label{def:phase-extraction}
    Let $H$ be an Hermitian matrix satisfying $||H|| < 1$. Given a controlled version of the unitaries $U = e^{i \pi H}$ and $U^\dag = e^{-i \pi H}$ as oracles and $\varepsilon > 0$, construct a quantum circuit $C_U$ such that:
    $$||\bra{0}_A C_U \ket{0}_A - H|| \le \varepsilon,$$
    where $A$ is the subsystem containing ancilla registers.
\end{problem}
\noindent In other words, if we initially have a quantum circuit acting on its eigenbasis $\{ \ket{\phi_k} \}_k$ as
\begin{align*}
    U \ket{\phi_k} = e^{i \pi \varphi_k} \ket{\phi_k}
\end{align*}
for $\varphi_k \in [0, 1)$, we would like $C_U$ to act on the same eigenbasis as
\begin{align*}
    C_U \ket{0} \ket{\phi_k} = \varphi_k \ket{0} \ket{\phi_k} + \ket{1} \ldots \ .
\end{align*}
Thus, $C_U$ will contain a so-called \emph{block encoding} of the matrix $H$. The reader might argue that the stated problem is ill-defined: a global phase on $U$ would change our desired output, but not the input. However, it is important to note that the unitaries $U, U^\dag$ have to be given in their \emph{controlled} versions, where a global phase on $U$ would be seen as a controlled phase kickback. Moreover, it is worth remarking that this problem is substantially different from the well-known \emph{phase-estimation} problem, where one wants to achieve classical information about~$\varphi_k$, given $\ket{\phi_k}$.

Now we would like to use the quantum eigenvalue transform to apply a polynomial transformation of the eigenvalues of $U$, using the oracles we have. In particular, Haah's work tells us that any Laurent polynomial transformation $F: U(1) \rightarrow SU(2)$ of degree $n$ can be implemented using the QSP construction with only $\bigO(n)$ calls to $U, U^\dag$, provided that $F$ has definite parity (see Appendix~\ref{apx:haah-construction}). It is sufficient for us to have a desired transformation $f: U(1) \rightarrow [-1, 1]$ on one entry of this matrix (from now on we will assume without loss of generality that this entry is the top-left one, i.e., $f(z) = \bra{0} F(z) \ket{0}$).

Therefore, all we need now is to design a class of polynomials uniformly approximating a function of the phase. An interesting observation is that Laurent polynomials on the unit circle can be constructed using Fourier series.

\begin{observation}
    \label{thm:phase-extraction-poly}
    Let $\phi : \R \mapsto \R$ be a $2\pi$-periodic function whose Fourier series uniformly converges, and let $f : U(1) \mapsto \R$ be such that, for any real $x$,
    \begin{align*}
        f(e^{i x}) = \phi(x)
    \end{align*}
    holds. Then, there is a sequence of complex (Laurent) polynomials $P_n : \C \mapsto \C$ which uniformly converges to $f$ on the unit circle.
\end{observation}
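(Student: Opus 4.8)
The plan is to build the polynomials $P_n$ directly from the Fourier series of $\phi$. Write the Fourier expansion $\phi(x) = \sum_{k \in \Z} c_k e^{i k x}$, which by hypothesis converges uniformly on $\R$. The key observation is that on the unit circle the monomial $z \mapsto z^k$ satisfies $(e^{ix})^k = e^{ikx}$, so the $N$-th symmetric partial sum $S_N(x) = \sum_{|k| \le N} c_k e^{ikx}$ is exactly the evaluation on $U(1)$ of the Laurent polynomial $Q_N(z) = \sum_{|k| \le N} c_k z^k$. Thus I would \emph{define} $P_n \defeq Q_n$ and simply transcribe the uniform convergence of the Fourier series into uniform convergence of $P_n$ to $f$ on the unit circle.

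The steps, in order, would be: (i) invoke the hypothesis that the Fourier series of $\phi$ converges uniformly, so that $\|S_N - \phi\|_\infty \to 0$ as $N \to \infty$, where the sup is over $x \in \R$ (equivalently over one period $[0, 2\pi)$ by periodicity); (ii) observe that every point of $U(1)$ is of the form $e^{ix}$ for some real $x$, and that under this parametrization $f(e^{ix}) = \phi(x)$ and $Q_N(e^{ix}) = S_N(x)$; (iii) conclude $\sup_{z \in U(1)} |Q_N(z) - f(z)| = \sup_{x \in \R} |S_N(x) - \phi(x)| = \|S_N - \phi\|_\infty \to 0$; (iv) note each $Q_N$ is a Laurent polynomial with complex coefficients (the $c_k$), i.e.\ $Q_N : \C \to \C$, so the sequence $(P_n) = (Q_n)$ has the claimed properties. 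A minor bookkeeping point is that $f$ was only defined on $U(1)$ via $\phi$, and one should check this is consistent: if $e^{ix} = e^{iy}$ then $x - y \in 2\pi\Z$, and $2\pi$-periodicity of $\phi$ gives $\phi(x) = \phi(y)$, so $f$ is well-defined — and the same remark shows $Q_N|_{U(1)}$ depends only on the point of the circle, not the chosen argument.

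Honestly, there is no real obstacle here: the statement is essentially a restatement of "uniform convergence of a Fourier series = uniform convergence of the associated Laurent polynomials on the circle," and the only thing to be careful about is the correspondence between the two notions of sup-norm (over $\R/2\pi\Z$ versus over $U(1)$) and the well-definedness of $f$ under the periodicity assumption. If one wanted to be slightly more self-contained, one could alternatively cite a standard sufficient condition for uniform Fourier convergence (e.g.\ $\phi$ continuously differentiable, or absolutely summable coefficients $\sum |c_k| < \infty$, in which case the tail bound $\sum_{|k| > N} |c_k| \to 0$ gives the convergence directly), but since uniform convergence of the Fourier series is taken as a hypothesis, the argument above is complete as stated.
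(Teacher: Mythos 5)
Your argument is correct and is essentially identical to the paper's proof: both take the symmetric partial sums of the Fourier series, substitute $z$ for $e^{ix}$ to obtain Laurent polynomials, and transfer the uniform convergence from $\R$ (mod $2\pi$) to $U(1)$ via the equality of sup-norms. The extra well-definedness check for $f$ is a harmless addition the paper leaves implicit.
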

This theorem tells us that we can construct a sequence of Laurent polynomials approximating any function of the eigenphase, and this will inherit all the strong convergence properties of Fourier sequences in our domain of interest. From now on, we use $\bar{\phi}(z)$ to denote the Laurent polynomial (or Laurent series) that satisfies $\bar{\phi}(e^{ix}) = \phi(x)$ for any real $x$. Notice that such function is unique since polynomials are fully defined by their behaviour on an infinite set.
\begin{proof}
    If $\phi(x) = \sum_{k \in \Z} c_k e^{i k x}$ is the Fourier series of $\phi(x)$, then
    \begin{align*}
        \bar{\phi}(z) = \sum_{k \in \Z} c_k z^k
    \end{align*}
    and this function is equal to $f$ on the unit circle. If $P_n(x) = \sum_{k = -n}^n c_k e^{ikx}$, then it is sufficient to see:
    \begin{align*}
        ||P_n - \phi||_{\R} = ||\bar{P}_n - \bar{\phi}||_{U(1)} = ||\bar{P}_n - f||_{U(1)},
    \end{align*}
    and the first norm tends to zero as $n \rightarrow \infty$.
\end{proof}
This result can be used to solve an extension of Problem~\ref{def:phase-extraction}. Indeed, we can compute a block encoding of $\phi(H)$ for an arbitrary real function $\phi$, provided it is sufficiently `well-behaved'.

Let us solve Problem~\ref{def:phase-extraction} using this technique: we would like that, for $\theta \in (-\pi, \pi]$, the eigenvalue $e^{i \theta}$ is mapped to $\theta/\pi$. Realistically, we cannot approximate this function in the whole interval, since there is a discontinuity in $\theta = \pm \pi$ and, because of this, the Fourier sequence could take too long to converge, or it could not uniformly converge at all. Therefore, we give up a small portion of the interval: for some small $\delta > 0$, we design a function $\phi_\delta(x)$ that is equal to $\phi(x) = x/\pi$ for every $x \in (-\pi + \delta, \pi - \delta)$, while in the neighbourhoods of $\pm \pi$ we replace its derivative with something continuous and piecewise linear that preserves the $2\pi$-periodicity of $\phi_\delta(x)$ (see Figure~\ref{fig:function-approx}). This implies that $\phi_\delta \in C^1$ by construction and, intuitively, the associated Fourier sum will converge faster. As a consequence, we will have to restrict Problem~\ref{def:phase-extraction} to instances with $||H|| \le 1 - \delta$.

\begin{figure}
    \centering
    \begin{subfigure}[b]{\columnwidth}
         \centering
         \begin{tikzpicture}
\begin{axis}[
    width=250pt,height=100pt,
    xmin=-4,xmax=4,
    ymin=-1.2,ymax=1.2,
    samples=50,
    xtick={-pi, -pi/2, 0, pi/2, pi},
    xticklabels={$-\pi$, $-\frac{\pi}{2}$, $0$, $\frac{\pi}{2}$, $\pi$},
    grid style={line width=.1pt, draw=gray!10},
    axis line style={latex-latex}]
    
    \addplot[blue, ultra thick, domain=-pi:pi] (x, x/pi);
    \addplot[blue, ultra thick, domain=-4:-pi] (x, x/pi + 2);
    \addplot[blue, ultra thick, domain=pi:4] (x, x/pi - 2);

    \addplot[mark=*] coordinates {(-pi,1)};
    \addplot[mark=o] coordinates {(-pi,-1)};
    \addplot[mark=*] coordinates {(pi,1)};
    \addplot[mark=o] coordinates {(pi,-1)};

    \draw [dashed] (axis cs:{-pi},-2) -- (axis cs:{-pi},2);
    \draw [dashed] (axis cs:{pi},-2) -- (axis cs:{pi},2);

    \node at (axis cs:-1.5,0.6) {$\phi(x)$};
  
\end{axis}
\end{tikzpicture}
    \end{subfigure}
    \hfill
    \begin{subfigure}[b]{\columnwidth}
        \centering
        \def\dlt{0.4}
\begin{tikzpicture}
\begin{axis}[
    width=250pt,height=110pt,
    xmin=-4,xmax=4,
    ymin=-1.2,ymax=1.45,
    samples=50,
    xtick={-pi, -pi/2, 0, pi/2, pi},
    xticklabels={$-\pi$, $-\frac{\pi}{2}$, $0$, $\frac{\pi}{2}$, $\pi$},
    grid style={line width=.1pt, draw=gray!10}]
    
    \addplot[red, ultra thick, domain=-pi+\dlt:pi-\dlt] (x, x/pi);
    \addplot[red, ultra thick, domain=-4:-pi-\dlt] (x, x/pi + 2);
    \addplot[red, ultra thick, domain=pi+\dlt:4] (x, x/pi - 2);

    \addplot[red, ultra thick, domain=-pi:-pi+\dlt] (x, x/pi + x*x/\dlt/\dlt + 2*pi*x/\dlt/\dlt - 2*x/\dlt + 1 + pi*pi/\dlt/\dlt - 2*pi/\dlt);
    \addplot[red, ultra thick, domain=-pi-\dlt:-pi] (x, 1 - 2*pi/\dlt - 2*x/\dlt + x/pi - pi*pi/\dlt/\dlt - pi*x/\dlt/\dlt - pi*x/\dlt/\dlt - x*x/\dlt/\dlt);
    \addplot[red, ultra thick, domain=pi-\dlt:pi] (x, -1 + 2*pi/\dlt - 2*x/\dlt + x/pi - pi*pi/\dlt/\dlt + pi*x/\dlt/\dlt + pi*x/\dlt/\dlt - x*x/\dlt/\dlt);
    \addplot[red, ultra thick, domain=pi:pi+\dlt] (x, -1 + 2*pi/\dlt - 2*x/\dlt + x/pi + pi*pi/\dlt/\dlt - pi*x/\dlt/\dlt - pi*x/\dlt/\dlt + x*x/\dlt/\dlt);

    \draw [dashed] (axis cs:{pi-\dlt},-2) -- (axis cs:{pi-\dlt},2);
    \draw [dashed] (axis cs:{pi+\dlt},-2) -- (axis cs:{pi+\dlt},2);
    \draw [dashed] (axis cs:{-pi-\dlt},-2) -- (axis cs:{-pi-\dlt},2);
    \draw [dashed] (axis cs:{-pi+\dlt},-2) -- (axis cs:{-pi+\dlt},2);

    \draw[|<->|] (axis cs:{-pi-\dlt},1) -- node[above] {$2\delta$} (axis cs:{-pi+\dlt},1);
    \draw[|<->|] (axis cs:{pi-\dlt},1) -- node[above] {$2\delta$} (axis cs:{pi+\dlt},1);

    \node at (axis cs:-1.5,0.85) {$\phi_\delta(x)$};
    
\end{axis}
\end{tikzpicture}
\undef\dlt
    \end{subfigure}
    \caption{Construction of $\phi_\delta(x)$ from $\phi(x)$. The two functions are identical except in the $\delta$-neighbourhood of the odd multiples of $\pi$. We have that $\phi_\delta \in C^1$ for any $\delta > 0$, so its Fourier series will converge faster than the series of $\phi(x)$.}
    \label{fig:function-approx}
\end{figure}
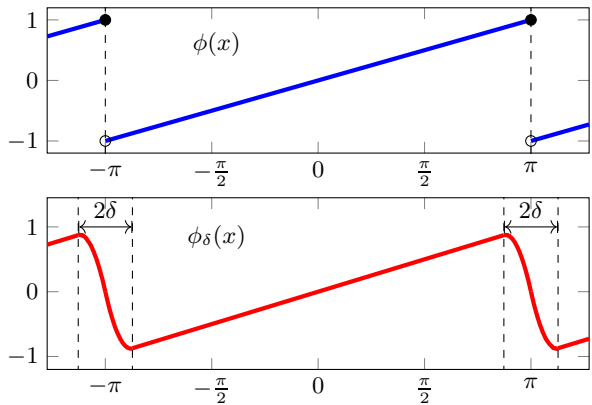

\begin{theorem}
    \label{thm:phase-extraction-jackson-rate}
    The function $\bar{\phi}_\delta : U(1) \rightarrow \R$ defined as
    \begin{align*}
        \bar{\phi}_\delta(e^{ix}) = \phi_\delta(x)
    \end{align*}
    for every $x \in \R$ can be $\epsilon$-approximated on the unit circle using a polynomial of degree
    $$d = \Tilde{\bigO}\left(\frac{1}{\delta} \sqrt{\frac{1}{\epsilon}}\right).$$
\end{theorem}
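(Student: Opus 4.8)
The plan is to invoke a quantitative form of Jackson's theorem on the rate of convergence of Fourier series (or trigonometric approximation) for functions of a given smoothness class, and then track how the relevant constants depend on $\delta$. By Observation~\ref{thm:phase-extraction-poly}, approximating $\bar\phi_\delta$ on $U(1)$ by a Laurent polynomial of degree $d$ is equivalent to approximating the $2\pi$-periodic function $\phi_\delta$ on $\R$ by a trigonometric polynomial of degree $d$, so it suffices to bound the degree $d$ needed to get $\|\phi_\delta - P_d\|_\infty \le \epsilon$. Since $\phi_\delta$ is built to be $C^1$ with a piecewise-linear derivative, $\phi_\delta'$ is Lipschitz, hence $\phi_\delta \in C^{1,1}$; equivalently $\phi_\delta''$ exists and is bounded almost everywhere. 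For such functions the standard Jackson-type estimate gives a trigonometric-polynomial error of order $\bigO(\omega(\phi_\delta', 1/d)/d)$ where $\omega$ is the modulus of continuity of the derivative, and with $\phi_\delta'$ Lipschitz this is $\bigO(L/d^2)$ with $L = \|\phi_\delta''\|_\infty$ the Lipschitz constant of $\phi_\delta'$. Setting this $\le \epsilon$ yields $d = \bigO(\sqrt{L/\epsilon})$.

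Next I would compute the $\delta$-dependence of $L$. From the construction in Figure~\ref{fig:function-approx}, on the $\delta$-neighbourhoods of $\pm\pi$ the derivative $\phi_\delta'$ transitions from $1/\pi$ to its mirrored value over a window of width $2\delta$, so the slope of $\phi_\delta'$ there — i.e.\ $\phi_\delta''$ — is of order $1/\delta$ (the derivative must change by an $\bigO(1)$ amount across an interval of length $\bigO(\delta)$, and it does so piecewise-linearly). Hence $L = \Theta(1/\delta)$, giving $d = \bigO(\sqrt{1/(\delta\epsilon)})$. To reach the stated bound $d = \tilde\bigO\!\big(\tfrac{1}{\delta}\sqrt{1/\epsilon}\big)$, I would be slightly more careful: the cleanest route is to bound the Fourier coefficients $|c_k|$ of $\phi_\delta$ directly by integrating by parts twice, which puts $\phi_\delta''$ (a bounded, piecewise-linear function supported essentially on the two windows of width $2\delta$) in the integrand; this yields $|c_k| = \bigO\!\big(\tfrac{1}{\delta k^2}\big)$ for $k \ne 0$ (an extra integration by parts on the piecewise-linear $\phi_\delta''$ can sharpen the tail but introduces the endpoint jumps of $\phi_\delta'''$, which scale like $1/\delta^2$, so the honest, uniform-in-$k$ bound is the $1/(\delta k^2)$ one). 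Truncating the Fourier series at degree $d$ then leaves a tail $\sum_{|k|>d}|c_k| = \bigO\!\big(\tfrac{1}{\delta d}\big)$; demanding this be $\le \epsilon$ already gives $d = \bigO\!\big(\tfrac{1}{\delta\epsilon}\big)$, which is weaker than claimed. The improvement to $\tfrac{1}{\delta}\sqrt{1/\epsilon}$ comes from replacing the crude truncation by the Jackson/Fejér-type near-best trigonometric approximant (e.g.\ de la Vallée-Poussin or a Cesàro-type mean), for which the error is controlled by $\omega(\phi_\delta', 1/d)$ rather than by the raw coefficient tail; with $\phi_\delta' $ Lipschitz with constant $\Theta(1/\delta)$ one gets $\omega(\phi_\delta',h) \le \min\{1, h/\delta\}\cdot\bigO(1)$, and the Jackson bound $\bigO(\omega(\phi_\delta',1/d)/d) = \bigO(1/(\delta d^2))$ closes the argument after setting it $\le\epsilon$. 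The $\tilde\bigO$ absorbs a logarithmic factor coming from the Lebesgue-constant overhead (of order $\log d$) incurred when passing from the best $L^\infty$ approximant to an explicitly constructible trigonometric polynomial, and possibly from rounding $d$ to keep $P_d$ of definite parity as required by Haah's formalism.

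The main obstacle I expect is pinning down the exact power of $1/\delta$ and making sure it is $1/\delta$ and not $1/\delta^{3/2}$ or $1/\delta^2$. The tension is that $\phi_\delta$ is only $C^{1,1}$ (its second derivative jumps, with jump size $\Theta(1/\delta)$, and its would-be third derivative blows up like $1/\delta^2$), so one cannot naively use a "$C^2$ gives $1/d^2$" or "$C^3$ gives $1/d^3$" heuristic with the wrong norm. The correct statement is the Jackson estimate for $C^{1,1}$ functions, $E_d(\phi_\delta) \le C\,\|\phi_\delta''\|_\infty / d^2$, and the whole argument rests on (i) verifying $\|\phi_\delta''\|_\infty = \Theta(1/\delta)$ from the explicit quadratic-patch construction, and (ii) citing the Jackson bound in exactly this $C^{1,1}$ form rather than a smoother-class version. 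Once those two points are secured, setting $C/(\delta d^2) \le \epsilon$ gives $d = \bigO(1/(\delta\sqrt\epsilon))$, and the $\tilde\bigO$ covers the logarithmic Lebesgue-constant factor — establishing the theorem.
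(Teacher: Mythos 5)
Your overall route is the same as the paper's: reduce to trigonometric approximation of the $2\pi$-periodic function $\phi_\delta$, invoke Jackson's estimate for functions with Lipschitz first derivative, and substitute $z=e^{ix}$ to pass to the Laurent polynomial on $U(1)$. However, the key quantitative step --- your claim that $\|\phi''_\delta\|_\infty=\Theta(1/\delta)$, which you yourself identify as the point ``the whole argument rests on'' --- is wrong. Since $\phi_\delta$ must remain continuous and $2\pi$-periodic while agreeing with $x/\pi$ outside the windows, across $[\pi-\delta,\pi+\delta]$ it has to drop from $1-\delta/\pi$ to $-1+\delta/\pi$, a decrease of $\approx 2$ over a width of $2\delta$. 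Its derivative therefore does not merely ``transition from $1/\pi$ to its mirrored value'': it must dip to about $-2/\delta$ at the midpoint. As $\phi'_\delta$ is piecewise linear and travels from $1/\pi$ down to $-2/\delta+\bigO(1)$ over a half-window of length $\delta$, its slope is $\Theta(1/\delta^2)$; this is exactly the constant the paper uses ($\phi'_\delta$ is $2/\delta^2$-Lipschitz; see the plot of $g^{(2)}_1\equiv\phi''_\delta$ in Figure~\ref{fig:function-smoothening-1}, whose values are $\pm 2/\delta^2$). Your assertion that ``the derivative must change by an $\bigO(1)$ amount across an interval of length $\bigO(\delta)$'' is the precise point of failure, and it propagates into your Fourier-coefficient digression as well (the honest bound is $|c_k|=\bigO(1/(\delta^2k^2))$, not $\bigO(1/(\delta k^2))$).

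With the correct constant $L=\Theta(1/\delta^2)$, Jackson's bound gives error $\bigO(L\log d/d^2)=\bigO(\log d/(\delta^2 d^2))$, and setting this $\le\epsilon$ yields $d=\Tilde{\bigO}(\delta^{-1}\epsilon^{-1/2})$ --- exactly the stated bound, with no need for the detour through coefficient tails. Note also that your closing algebra is loose: $C/(\delta d^2)\le\epsilon$ is already achieved at $d=\Theta(\delta^{-1/2}\epsilon^{-1/2})$, so your (erroneous) Lipschitz constant would actually prove a strictly stronger degree bound than the theorem states; that you nonetheless land on the stated $d$ is a compensating slack rather than a derivation. Correct the Lipschitz constant and your argument becomes the paper's proof.
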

\begin{proof}
    Let $S_{\delta, d}(x) = \sum_{k = -d}^{d} c_k e^{ikx}$ be the Fourier sum of $\phi_\delta$ up to terms of degree $d$. By a result of Jackson~\cite[pp.\ 20--25]{jacksonTheoryApproximation1930a} we know that, since $\phi_\delta \in C^1$ and $\phi'_\delta$ is $2/\delta^2$-Lipschitz, the approximation error of the $d$-th degree Fourier sum is bounded by
    \begin{align*}
        ||S_{\delta, d} - \phi_\delta||_{\R} \le K \frac{\log d}{\delta^2 d^2} = K \frac{1}{\delta^2 d^{2-o(1)}}
    \end{align*}
    for some absolute constant $K$. The above is bounded by $\epsilon$ when $d^{2-o(1)} = \bigO\left(\frac{1}{\epsilon \delta^2}\right)$ or, in other words
    \begin{align*}
        d = \Tilde{\bigO}\left( \frac{1}{\delta} \sqrt{\frac{1}{\epsilon}} \right).
    \end{align*}
    This concludes the proof since we can replace $e^{ix} = z$ in $S_{\delta, d}$ to obtain the sequence of Laurent polynomials $\bar{S}_{\delta, d}$ uniformly converging to $\bar{\phi}_\delta$ on the unit circle with the same rate.
\end{proof}
It is interesting to point out that, if one only cares about a constant $\delta$ (e.g., $\delta = \pi/2$ so we get our good approximation only on the right semicircle), one could increase the smoothness of $\phi_\delta$ to get a better dependence from $1/\epsilon$. The Lipschitz constant increases, but it would only depend on $\delta$. We investigate this improvement in Section~\ref{sec:inductive-smoothening}.

Now we have a Laurent polynomial approximating the function $\bar{\phi}_\delta$ that we would like to achieve. We need to take care of one last problem: remember that Haah's construction requires the Laurent polynomial to be either even or odd. Keep in mind that $\bar{\phi}_\delta(z)$ does not have definite parity (and so do not its approximating polynomials), even though $\phi_\delta$ is odd.

A simple workaround is to split $\bar{S}_{\delta, d}(z)$ into even and odd polynomials $\bar{S}^0_{\delta, d}(z), \bar{S}^1_{\delta, d}(z)$, implement them separately, and then add them up using a simple block encoding (see Figure~\ref{fig:block-encoding-sum-circuit}). The phase extraction function $\phi(x)$ can be split into $\phi_0, \phi_1$ such that $\phi_0(x) = \phi_0(x + \pi)$ and $\phi_1(x) = -\phi_1(x + \pi)$ (the reader can check that the Laurent polynomials associated to their Fourier series are even and odd, respectively). Moreover, both of them are bounded by~$1/2$ in absolute value everywhere, and the sum of their Laurent polynomials gives exactly the Laurent polynomial of $\phi$ (see Figure~\ref{fig:function-approx-even-odd}).

\begin{figure}
    \centering
    \begin{subfigure}[b]{\columnwidth}
         \centering
         \begin{tikzpicture}
\begin{axis}[
    width=250pt,height=100pt,
    xmin=-4,xmax=4,
    ymin=-0.6,ymax=0.6,
    samples=50,
    xtick={-pi, -pi/2, 0, pi/2, pi},
    xticklabels={$-\pi$, $-\frac{\pi}{2}$, $0$, $\frac{\pi}{2}$, $\pi$},
    grid style={line width=.1pt, draw=gray!10},
    axis line style={latex-latex}]
    
    \addplot[green, ultra thick, domain=-pi:0] (x, x/pi + 1/2);
    \addplot[green, ultra thick, domain=0:pi] (x, x/pi - 1/2);
    \addplot[green, ultra thick, domain=pi:4] (x, x/pi - 3/2);
    \addplot[green, ultra thick, domain=-4:-pi] (x, x/pi + 3/2);

    \addplot[mark=*] coordinates {(-pi,1/2)};
    \addplot[mark=o] coordinates {(-pi,-1/2)};
    \addplot[mark=*] coordinates {(0,1/2)};
    \addplot[mark=o] coordinates {(0,-1/2)};
    \addplot[mark=*] coordinates {(pi,1/2)};
    \addplot[mark=o] coordinates {(pi,-1/2)};

    \draw [dashed] (axis cs:{-pi},-2) -- (axis cs:{-pi},2);
    \draw [dashed] (axis cs:0,-2) -- (axis cs:0,2);
    \draw [dashed] (axis cs:{pi},-2) -- (axis cs:{pi},2);

    \node at (axis cs:-1.5,0.35) {$\phi_0(x)$};
  
\end{axis}
\end{tikzpicture}
    \end{subfigure}
    \hfill
    \begin{subfigure}[b]{\columnwidth}
        \centering
        \begin{tikzpicture}
\begin{axis}[
    width=250pt,height=100pt,
    xmin=-4,xmax=4,
    ymin=-0.6,ymax=0.6,
    samples=50,
    xtick={-pi, -pi/2, 0, pi/2, pi},
    xticklabels={$-\pi$, $-\frac{\pi}{2}$, $0$, $\frac{\pi}{2}$, $\pi$},
    grid style={line width=.1pt, draw=gray!10},
    axis line style={latex-latex}]
    
    \addplot[yellow, ultra thick, domain=-pi:0] (x, -1/2);
    \addplot[yellow, ultra thick, domain=0:pi] (x, 1/2);
    \addplot[yellow, ultra thick, domain=pi:4] (x, -1/2);
    \addplot[yellow, ultra thick, domain=-4:-pi] (x, 1/2);

    \addplot[mark=*] coordinates {(-pi,1/2)};
    \addplot[mark=o] coordinates {(-pi,-1/2)};
    \addplot[mark=o] coordinates {(0,1/2)};
    \addplot[mark=*] coordinates {(0,-1/2)};
    \addplot[mark=*] coordinates {(pi,1/2)};
    \addplot[mark=o] coordinates {(pi,-1/2)};

    \draw [dashed] (axis cs:{-pi},-2) -- (axis cs:{-pi},2);
    \draw [dashed] (axis cs:0,-2) -- (axis cs:0,2);
    \draw [dashed] (axis cs:{pi},-2) -- (axis cs:{pi},2);

    \node at (axis cs:-1.5,0.35) {$\phi_1(x)$};
  
\end{axis}
\end{tikzpicture}
    \end{subfigure}
    \caption{Plots of $\phi^0(x)$ and $\phi^1(x)$. One can see that their sum will give exactly $\phi(x)$. Since $\phi(x + \pi) = \bar{\phi}(e^{i(x + \pi)}) = \bar{\phi}(-e^{ix})$, this determines the parity of the approximating Laurent polynomials.}
    \label{fig:function-approx-even-odd}
\end{figure}
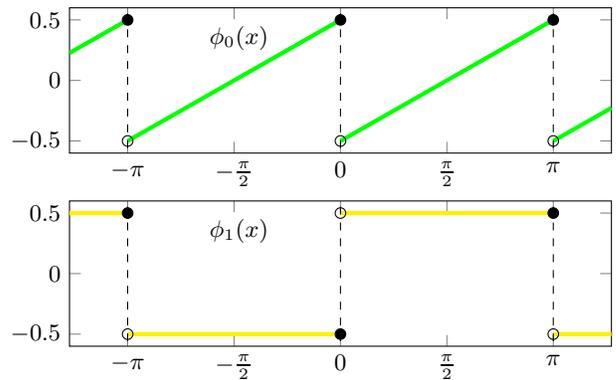

In the end, this procedure has a certain failure probability due to non-unitarity of the matrix we are extracting, which can be mitigated by classical repetition or amplitude amplification techniques~\cite{brassardQuantumAmplitudeAmplification2002, berryExponentialImprovementPrecision2014a, berrySimulatingHamiltonianDynamics2015}. It is now important to estimate the initial probability of measuring $\ket{00}$ (i.e., our success probability), in order to bound the multiplicative factor that the amplifying procedure gives to our overall complexity. However, this strictly depends on the unitary $U$ whose phases we want to extract: for example, if $U$ has all the eigenphases close to $0$, then all of them will be mapped to very small amplitudes by $\bar{\phi}_\delta(z)$, giving us a low probability of success, which is harder to amplify. In the next section we will see a concrete example.

\begin{figure}
    \centering
    \input{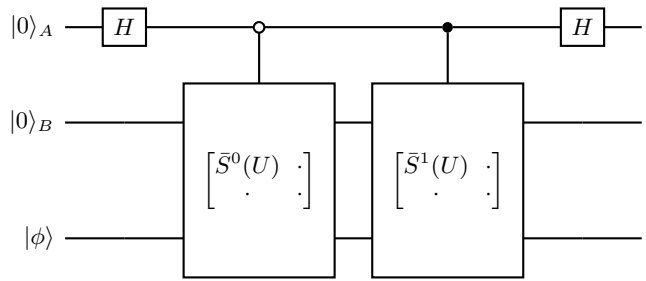}
    \caption{Implementation of the block encoding for summing two matrices. The two big gates represent a quantum eigenvalue transform on the unitary $U$ implementing the polynomials $\bar S^0, \bar S^1$. If $\ket{00}$ is measured on the two control qubits, then $\ket{\phi}$ will be transformed by the matrix $\bar S^0(U) + \bar S^1(U) \equiv \bar S(U)$, implementing the original Fourier approximation.}
    \label{fig:block-encoding-sum-circuit}
\end{figure}
\section{Application to Proportional Sampling}
\label{sec:prop-sampling}

\begin{figure*}
    \centering
    \input{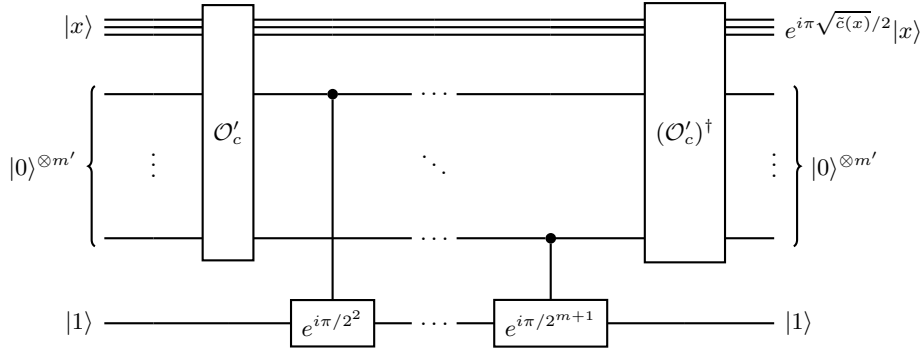}
    \caption{Implementation of a phase oracle for $\sqrt{\Tilde{c}(x)}$ using one call to $\bigO'_c, (\bigO'_c)^\dag$. The total phase applied to the quantum state will be $e^{i \pi \sqrt{\Tilde{c}(x)} / 2}$ where $|\sqrt{\Tilde{c}(x)} - \sqrt{c(x)}| \le 2^{-m}$. The two ancilla registers can then be discarded.}
    \label{fig:phase-oracle}
\end{figure*}

\noindent We consider the following problem.
\begin{problem}
    \label{def:prop-sampling}
    Consider a set $X$ of elements. We are given a function $c(x) : X \mapsto [0, 1)$ as an oracle, i.e.,
    \begin{align*}
        \bigO_c \ket{x} \ket{0}^{\otimes m} = \ket{x} \ket{c(x)} \ ,
    \end{align*}
    where $\ket{c(x)}$ contains the $m$ bits of $c(x)$ after the decimal point. Sample a value $x \in X$ such that $P(x) \propto c(x)$ up to some error $\epsilon > 0$. In other words:
    \begin{align*}
        \left| P(x) - \frac{c(x)}{\sum_{x \in X} c(x)} \right| \le \epsilon \ .
    \end{align*}
\end{problem}
For ease of exposition, we assume that the elements of $X$ are encoded as integers, i.e., $X = \{ 0, \ldots, N - 1 \}$, where $N \le 2^n \le 2N$. Moreover, we assume to know the \emph{average oracle value} $\bar{c} := \frac{1}{N} \sum_{x \in X} c(x)$.

In pratical applications, one usually wants to bound the so-called \emph{total variation distance} between the target and implemented probability distributions (see Section 4.1 of~\cite{levinMarkovChainsMixing2017} for a nice introduction to this metric). In order to bound this distance by $\epsilon$ it is sufficient to solve Problem~\ref{def:prop-sampling} within $\epsilon/N$ error. Quantum speed-up for this stronger version of the problem can be achieved using the improvements proposed in Section~\ref{sec:inductive-smoothening}.

\subsection{Constructing a phase oracle}
Since we want $c(x)$ as a probability, the idea is to extract the phases using the transformation $\bar{\phi}(z)$ designed in Section~\ref{sec:phase-extraction-problem} on the unitary
\begin{align*}
    U \ket{x} = e^{i \pi \sqrt{c(x)} / 2} \ket{x} \ .
\end{align*}
The factor $\pi/2$ will be clear later. Of course we cannot reproduce an arbitrary square root with an infinite number of digits as our eigenphases, but here we show a simple construction for the following unitary
\begin{align*}
    U' \ket{x} = e^{i \pi \sqrt{\Tilde{c}(x)} / 2} \ket{x} \ ,
\end{align*}
where $\sqrt{\Tilde{c}(x)}$ is the $m'$-bit truncation of $\sqrt{c(x)}$. Using the bit oracle $\bigO_c$ given in Problem~\ref{def:prop-sampling}, we can construct a bit oracle $\bigO'_c$ behaving as
\begin{align*}
    \bigO'_c \ket{x} \ket{0}^{\otimes m'} = \ket{x} \ket{\sqrt{\Tilde{c}(x)}} \ ,
\end{align*}
by decorating one call to the original oracle with a square root algorithm of $m'$-bit precision.

Now, using a standard construction (depicted in Figure~\ref{fig:phase-oracle}), we can implement $U'$ using $\bigO'_c$ and $(\bigO'_c)^\dag$. Only two copies of the original oracles $\bigO_c, \bigO^\dag_c$ are needed to construct $U'$.

\subsection{Bounding the approximation error}
By transforming $U'$ using $\bar{\phi}_\delta(z)$ we obtain an Hermitian block-encoded matrix $H_c$ acting as
\begin{align}
    H_c \ket{x} & = \bar\phi_\delta(e^{i \pi \sqrt{\Tilde{c}(x)} / 2}) \ket{x}\nonumber \\
    & = \phi_\delta \left(\frac{\pi}{2} \sqrt{\Tilde{c}(x)} \right) \ket{x} = \frac{1}{2} \sqrt{\Tilde{c}(x)} \ket{x} \label{eq:prop-sampling-ideal-transform}
\end{align}
provided that $|\frac{\pi}{2} \sqrt{\Tilde{c}(x)}| \le \pi - \delta$. It should now be evident why we added the $\pi/2$ factor, as we can fix $\delta$ as high as $\pi/2 = \Theta(1)$. If we applied this ideal transformation to the uniform superposition, we would obtain
\begin{align*}
    H_c \ket{+}^{\otimes n} = \frac{1}{2 \sqrt{N}} \sum_{x \in X} \sqrt{\Tilde{c}(x)} \ket{x}
\end{align*}
We have two sources of error on the final sampling probabilities: the first one, $|c(x) - \Tilde{c}(x)|$, is given by the $m'$-bit truncation and can be bounded easily
\begin{align*}
    |c(x) - \Tilde{c}(x)| & = |\sqrt{c(x)} + \sqrt{\Tilde{c}(x)}| \cdot |\sqrt{c(x)} - \sqrt{\Tilde{c}(x)}| \\
    & \le 2 \cdot \frac{1}{2^{m'}} = \frac{1}{2^{m'-1}} \stackrel{!}{\le} \frac{\epsilon'}{2} \ ,
\end{align*}
where the last inequality holds for $m' = \bigO(\log \epsilon')$. The second noise comes from the approximation of $\phi_\delta$ by $S_{\delta, d}$: if the approximation is up to $\epsilon'/16$ then
\begin{align*}
    \left| S^2_{\delta, d}\left(\frac{\pi}{2} \sqrt{c} \right) - \frac{1}{4} c \right| & = \left| S^2_{\delta, d}\left(\frac{\pi}{2} \sqrt{c} \right) - \phi^2_\delta\left(\frac{\pi}{2} \sqrt{c} \right) \right| \\
    & \le ||S_{\delta, d} + \phi_\delta||_{\R} \cdot ||S_{\delta, d} - \phi_\delta||_{\R} \\
    & \le 2 \cdot \frac{\epsilon'}{16} = \frac{\epsilon'}{8}
\end{align*}
for any $c \in [0, 1)$ (the $1/4$ factor comes from the $1/2$ factor in Eq.~(\ref{eq:prop-sampling-ideal-transform})). This is guaranteed by Theorem~\ref{thm:phase-extraction-jackson-rate} if we take
\begin{align*}
    d = \Tilde{\bigO}\left( \frac{1}{\delta} \sqrt{\frac{8}{\epsilon'}} \right) = \Tilde{\bigO}\left( \sqrt{\frac{1}{\epsilon'}} \right)
\end{align*}
as $\delta$ was already fixed to be constant. Therefore, the distance between the ideal oracle $c(x)$ and our implementation $s(x) := 4 S^2_{\delta, d}(\frac{\pi}{2} \sqrt{\Tilde{c}(x)})$ is
\begin{align*}
    \left|s(x) - c(x)\right| & \le \left|s(x) - \Tilde{c}(x)\right| + |\Tilde{c}(x) - c(x)| \\
    & \le 4 \frac{\epsilon'}{8} + \frac{\epsilon'}{2} \le \epsilon' \ .
\end{align*}
This is needed to bound the error induced in the sampling probabilities, but it is not sufficient, as this error bound will now be amplified by the amplitude amplification scheme.

\subsection{Amplifying the state}
The state $H_c \ket{+}^{\otimes n}$ is sub-normalized, because $H_c$ is not a unitary transformation, but only a block-encoded Hermitian matrix. By looking again at Figure~\ref{fig:block-encoding-sum-circuit}, we can see that we have two control qubits, $A$ and $B$. We measure both these qubits in the computational basis and, if we measure $\ket{00}$, then we picked the correct block. The probability of this happening is:
\begin{align*}
    || H_c \ket{+}^{\otimes n} ||^2 & = \frac{1}{N} \sum_{x \in X} S^2_{\delta, d} \left(\frac{\pi}{2} \sqrt{\Tilde{c}(x)} \right) \\
    & \ge \frac{1}{4 N} \sum_{x \in X} (c(x) - \epsilon') =: \frac{1}{4}(\bar{c} - \epsilon')\ ,
\end{align*}
and we take $\epsilon' = \frac{\bar{c} \epsilon}{2} \le \frac{\bar{c}}{2}$. Since an upper bound $\frac{1}{4}(\bar{c} + \epsilon')$ can be derived in a similar fashion, the initial amplitude is $\Theta(\sqrt{\bar{c}})$ and we can employ an \emph{amplitude amplification} procedure (see Appendix~\ref{apx:amplitude-amplification}) to normalize the state with $\bigO\left(\frac{1}{\sqrt{\bar{c}}}\right)$ repetitions. Using $c^*, s^*$ as shorthands for $\sum_x c(x), \sum_x s(x)$, we get
\begin{align*}
    \left| \frac{c(x)}{c^*} - \frac{s(x)}{s^*}\right| & = \frac{1}{s^* c^*} \left|c(x) s^* - s(x) c^* \right| \\
    & = \frac{s(x)}{s^* c^*} \left| s^* - c^* \right| + \frac{1}{c^*} |c(x) - s(x)| \\
    & \le \frac{1}{c^*} \left| s^* - c^* \right| + \frac{1}{c^*} |c(x) - s(x)| \\
    & \le \frac{N \epsilon'}{c^*} + \frac{\epsilon'}{c^*} = \frac{\epsilon'}{\bar{c}} + \frac{\epsilon'}{N \bar{c}} \\
    & \le \frac{2 \epsilon'}{\bar{c}}  \stackrel{!}{\le} \epsilon \ ,
\end{align*}
and one can see that our choice of $\epsilon'$ always satisfies the last inequality. To sum up, we obtained the following algorithm:
\begin{algorithm}[Proportional Sampling by QSP]
\label{alg:prop-sampling-qsp}
Let $\bar{c} = \frac{1}{N} \sum_{x \in X} c(x)$ be the average oracle value, and fix $\delta = \frac{\pi}{2}$.
\begin{enumerate}
        \item Use Quantum Eigenvalue Transform on the eigenvalues of $U'$ using $\bar{S}_{\delta, d}(z)$, obtaining a block-encoding of $\bar{S}_{\delta, d}(U')$. This requires $d$ calls to $U'$ or, equivalently, $2d$ calls to $\bigO_c$.

        \item Compute the state $\bar{S}_{\delta, d}(U') \ket{+}^{\otimes n}$.

        \item Use OAA to amplify the above state. This requires $\bigO(\frac{1}{\sqrt{\bar{c}}})$ copies of $\bar{S}_{\delta, d}(U')$.
    \end{enumerate}
    The number of total calls to $\bigO_c$ will be
    \begin{align*}
        \bigO \left( \frac{1}{\sqrt{\bar{c}}} \cdot d \right) = \Tilde{\bigO} \left( \frac{1}{\bar{c}} \sqrt{\frac{1}{\epsilon}} \right)
    \end{align*}
    to achieve error up to $\epsilon$ with high probability.
\end{algorithm}
We remark that Algorithm~\ref{alg:prop-sampling-qsp} is a so-called \emph{Las Vegas} algorithm: whenever it fails (i.e., we pick the wrong block of the encoding), we immediately know, because we measure $\neq 00$ on the control qubits. Therefore, if this check fails, we repeat the whole algorithm, and the number of repetitions is constant in expectation, as the probability of success is lower bounded by a constant after the amplification scheme.

\subsection{Separation proof}
In this subsection, we show that Algorithm~\ref{alg:prop-sampling-qsp} gives actual speed-up over any classical algorithm. Consider the following instance: we assume $N$ to be even and divide $\{ 0, \ldots, N - 1 \}$ into two equally sized sets $A, B$, where
\begin{align*}
    c(x) =
    \begin{cases}
        \frac{1}{4} & x \in A \\
        \frac{1}{8} & x \in B
    \end{cases}
\end{align*}
One can see that, in this case, $\bar{c} = 3/16$ and if we want to sample up to error $\epsilon = 1/100 N$, Algorithm~\ref{alg:prop-sampling-qsp} samples correctly with $\Tilde{\bigO}(\sqrt{N})$ queries to the oracle.

\begin{theorem}
    No classical algorithm can solve Problem~\ref{def:prop-sampling} with less than $N - 1$ queries to the oracle.
\end{theorem}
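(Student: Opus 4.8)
The plan is an adversary (indistinguishability) argument. First I would pin down the target distribution for the stated instance: since $\sum_{x\in X} c(x) = \frac{N}{2}\cdot\frac14 + \frac{N}{2}\cdot\frac18 = \frac{3N}{16}$, the correct sampling probability is $\frac{4}{3N}$ for every $x\in A$ and $\frac{2}{3N}$ for every $x\in B$, so the two target values differ by $\frac{2}{3N}$, which is much larger than $2\epsilon = \frac{1}{50N}$. The point of the instance is exactly this: the only thing an algorithm needs to know about $x$ is whether it lies in $A$ or $B$, and getting that wrong changes the required output probability by a non-negligible amount. Note that ``less than $N-1$ queries'' means at most $N-2$ queries, hence at least two elements of $X$ are never queried.

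Next I would describe the adversary's answering strategy and verify it always keeps the instance ambiguous on two unqueried elements. The adversary answers $\frac14$ to the first $N/2-1$ distinct queries and $\frac18$ to every subsequent distinct query (answering consistently on repeats). A short counting check — splitting into the cases ``at most $N/2-1$ queries so far'' and ``more than $N/2-1$ queries so far'' — shows that, no matter how the (possibly adaptive) algorithm distributes its $\le N-2$ queries, one can select two never-queried elements $x_1,x_2$ and complete the partial oracle in two ways: $c$ with $c(x_1)=\frac14,\ c(x_2)=\frac18$, and $c'$ obtained from $c$ by swapping these two values. In both completions exactly $N/2$ elements carry the value $\frac14$, so both are legal instances with $\bar c = \frac{3}{16}$; in particular the algorithm's permitted knowledge of $\bar c$ does not help it distinguish them.

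Since $c$ and $c'$ produce identical transcripts of oracle answers, a deterministic algorithm must output the same distribution $P$ on both. But $\epsilon$-correctness would force both $\bigl|P(x_1)-\tfrac{4}{3N}\bigr|\le\epsilon$ (on input $c$) and $\bigl|P(x_1)-\tfrac{2}{3N}\bigr|\le\epsilon$ (on input $c'$), whence $\tfrac{2}{3N}\le 2\epsilon$, contradicting $\epsilon=\tfrac{1}{100N}$. Hence no deterministic algorithm with at most $N-2$ queries solves the instance. For randomized algorithms the same conclusion follows from Yao's minimax principle: take the uniform distribution over all $\binom{N}{N/2}$ legal instances; conditioned on the transcript produced by any $\le(N-2)$-query deterministic strategy, with constant probability there are two still-unqueried elements whose correct sampling probabilities are $\tfrac{2}{3N}$ apart and whose two orderings each retain constant posterior probability, so the (now fixed) output probability on one of them is $\epsilon$-close to at most one of the two targets, forcing expected error $\Omega(1/N)\gg\epsilon$.

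The main obstacle is handling adaptivity: because queries are chosen in response to previous answers, one cannot fix in advance which two elements remain unqueried, so the adversary must be adaptive and one must verify that the two legal completions $c,c'$ always exist — this is exactly the counting argument sketched above, and it is the only step requiring care; the arithmetic of the target probabilities, the triangle-inequality contradiction, and the Yao reduction are routine. Finally I would add a remark that the bound is tight: once $\bar c$ is known, $N-1$ queries determine $c$ on all of $X$ (the last value is forced), so this instance — and the problem in general — is solvable with exactly $N-1$ queries.
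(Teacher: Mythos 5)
Your proof is correct and follows essentially the same route as the paper's: fix the half-$\frac14$/half-$\frac18$ instance, note that at most $N-2$ queries leave two elements unqueried, and exhibit a second legal instance with the same average $\bar c$ that agrees on every queried point but whose target probability at an unqueried point differs from the original's by more than $2\epsilon$. The only differences are cosmetic refinements: the paper perturbs to $c'(x_1)=0$ instead of swapping $\frac14\leftrightarrow\frac18$, and it is more casual about adaptivity and randomized query strategies, which you handle explicitly via the adversary strategy and Yao's principle.
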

\begin{proof}
    Let $P_c(\cdot) = \frac{c(x)}{\sum_y c(y)}$ be the probability distribution to approximate for instance $c$. Assume a classical algorithm $\mathcal{A}$ doing at most $N - 2$ queries to the oracle, and let $x_1, x_2$ be two of the values not queried by $\mathcal{A}$. If we take any instance $c'$ by only modifying $c(x_1), c(x_2)$ (in such a way that $\bar{c}' = \bar{c}$), then $\mathcal{A}$ will return the same probability distribution also for these two values, call it $P_{\mathcal{A}}(\cdot)$. Let us consider two separate cases: if one of $x_1, x_2 \in A$ (w.l.o.g. $x_1$), then $P_c(x_1) = 4/3N$, and the intervals of admitted values for correctness are at most $2/100 N$ long. Therefore, if we choose $c'(x_1) = 0$ (and $c'(x_2)$ so that $\bar{c}' = \bar{c}$), then $P_{c'}(x_1) = 0$, and $P_\mathcal{A}(x_1)$ cannot be $\epsilon$-close to both these probabilities.

    If $x_1, x_2 \in B$ setting $c'(x_1) = 0, c'(x_2) = \frac{1}{4}$ gives $P_{c'}(x_1) = 0 < P_{c}(x_1) - \frac{2}{100 N}$, therefore even in this case, $P_{\mathcal{A}}(x_1)$ cannot well-approximate both $c, c'$. We conclude that $\mathcal{A}$ is not correct.
\end{proof}
\section{Faster and Faster Approximations}
\label{sec:inductive-smoothening}
In this section we would like to improve on the rate of convergence of Theorem~\ref{thm:phase-extraction-jackson-rate}, which we can plug into Algorithm~\ref{alg:prop-sampling-qsp} to achieve further speed-up. We restate an important result about Fourier series here, which we also used in the proof of Theorem~\ref{thm:phase-extraction-jackson-rate}:
\begin{theorem}[Jackson~\cite{jacksonTheoryApproximation1930a}, Corollary 3, p.\ 22]
    \label{thm:jackson-rate}
    If $f : \R \rightarrow \R$ is a $2\pi$-periodic function such that its $p$-th derivative is $K$-Lipschitz continuous, then its Fourier sum $S_d(x)$ of $d$-th degree satisfies
    \begin{align*}
        ||S_d - f||_\R \le K \frac{A_p \log d}{d^{p+1}}
    \end{align*}
    where $A_p$ is a constant depending only on $p$.
\end{theorem}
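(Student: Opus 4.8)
The statement is the classical Jackson theorem on the rate of convergence of Fourier sums, and the paper (rightly) just quotes it from Jackson's monograph; for completeness, here is the argument I would reconstruct. The plan is to decouple the estimate into two independent pieces: (i) a bound on the \emph{best} trigonometric approximation $E_d(f) := \inf_{\deg T \le d} ||T - f||_\R$ of the form $E_d(f) \le C_p\, K / d^{p+1}$ (the Jackson inequality proper), and (ii) a comparison of the Fourier partial sum $S_d$ against this best approximation, which is exactly where the spurious $\log d$ factor enters, through the growth of the Lebesgue constant. The final bound is then just the product of the two.

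For step (i), I would use a smoothing kernel with enough vanishing moments — the \emph{Jackson kernel} $\mathcal{K}_n(t) = \gamma_n \left( \tfrac{\sin(nt/2)}{\sin(t/2)} \right)^{2q}$, normalized so that $\tfrac{1}{2\pi}\int_{-\pi}^{\pi}\mathcal{K}_n = 1$, with $q$ chosen large enough relative to $p$. Taking the fourth power (rather than the square, as in the Fejér kernel) already buys the second-moment estimate $\tfrac{1}{2\pi}\int_{-\pi}^{\pi} t^2\,\mathcal{K}_n(t)\,\D t = \bigO(n^{-2})$ on top of the $\bigO(n^{-1})$ first-moment bound. For $p=0$ (i.e. $f$ itself $K$-Lipschitz), convolving and Taylor-expanding $f$ pointwise gives $||\mathcal{K}_n * f - f||_\R = \bigO(K/n)$, and since $\mathcal{K}_n * f$ is a trigonometric polynomial of degree $\bigO(n)$ this yields $E_d(f) = \bigO(K/d)$. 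The general case $p \ge 1$ I would get by reducing to $p=0$ applied to $g := f^{(p)}$ (which is $K$-Lipschitz, $2\pi$-periodic, and has mean zero by periodicity), then integrating $p$ times: each antiderivative of a mean-zero trigonometric polynomial is again such a polynomial up to an additive constant, and iterating the bound $p$ times produces $E_d(f) \le C_p\, K / d^{p+1}$. Equivalently, one can invoke the general Jackson inequality $E_d(f) \le C_p\, d^{-p}\, \omega(f^{(p)}; 1/d)$ and bound $\omega(f^{(p)}; 1/d) \le K/d$.

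For step (ii), I would use the Lebesgue inequality: for any trigonometric polynomial $T$ of degree $\le d$, since $S_d T = T$ and $S_d$ is a bounded linear projection on the space of continuous $2\pi$-periodic functions,
$$||S_d f - f||_\R \le ||S_d (f - T)||_\R + ||T - f||_\R \le (1 + \Lambda_d)\, ||T - f||_\R,$$
where $\Lambda_d = ||S_d||$ is the Lebesgue constant; taking the infimum over $T$ gives $||S_d f - f||_\R \le (1 + \Lambda_d)\, E_d(f)$. The classical asymptotic $\Lambda_d = \tfrac{4}{\pi^2}\log d + \bigO(1)$, obtained from the $L^1$-norm of the Dirichlet kernel, then combines with step (i) to give $||S_d - f||_\R \le K\, A_p\, \log d / d^{p+1}$ with $A_p$ depending only on $p$, as claimed.

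The main obstacle is step (i) in the higher-order regime $p \ge 1$: obtaining the clean exponent $d^{-(p+1)}$ forces the smoothing kernel to kill polynomials up to a degree tied to $p$ (enough moments must vanish), and the bookkeeping when integrating the $p=0$ estimate back up — in particular tracking the integration constants so that $2\pi$-periodicity is preserved at every stage — is the delicate part. Everything in step (ii) is standard and fully quantitative once the Dirichlet-kernel $L^1$ asymptotics are available.
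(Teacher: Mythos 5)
The paper offers no proof of this statement at all --- it is quoted verbatim from Jackson's monograph as a black box --- so there is nothing to compare against except the classical literature, and your reconstruction is the standard modern route: (i) the Jackson inequality $E_d(f) \le C_p K/d^{p+1}$ for the best trigonometric approximation, proved via the Jackson kernel (the fourth power of $\sin(nt/2)/\sin(t/2)$, which, unlike the Fej\'er kernel, has the second moment $\bigO(n^{-2})$ needed for the Lipschitz estimate), and (ii) the Lebesgue inequality $\|S_d f - f\| \le (1+\Lambda_d) E_d(f)$ with $\Lambda_d = \tfrac{4}{\pi^2}\log d + \bigO(1)$. Both halves are correct and the product gives exactly the stated bound. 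The only place your sketch is thinner than it should be is the passage from $p=0$ to general $p$: naively integrating a degree-$d$ approximant $T$ of $f^{(p)}$ back up $p$ times preserves the error $\bigO(K/d)$ but does \emph{not} by itself gain the extra factor $d^{-p}$; the correct iteration applies the $p=0$ Jackson inequality to the function $f - \int T$ (with $T$ normalized to mean zero so that $\int T$ is again a periodic trigonometric polynomial of degree $\le d$), using $\omega\bigl(f - \textstyle\int T;\, 1/d\bigr) \le \tfrac{1}{d}\|f' - T\| = \tfrac{1}{d}E_d(f')$ to obtain $E_d(f) \le \tfrac{C}{d}E_d(f')$, and then iterates. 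You do cite the equivalent general form $E_d(f) \le C_p\, d^{-p}\,\omega(f^{(p)};1/d)$, which subsumes this, so the argument is sound once that one lemma is stated properly.
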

In our example we took $p = 1$, we smoothed out $\phi_\delta$ so that its first derivative is continuous and we noticed that it is $\frac{2}{\delta^2}$-Lipschitz. This led to $d = \Tilde\bigO\left(\frac{1}{\delta} \sqrt{\frac{1}{\epsilon}}\right)$ to bound the approximation error by $\epsilon$ on the unit circle. However in the application for proportional sampling we could do more: we could smooth out all the derivatives up to some constant $p$, taking advantage of the fact that $\delta = \pi/2$.

\begin{figure*}
    \centering
    \def\dlt{0.4}
\begin{tikzpicture}
\begin{axis}[
    width=250pt,height=110pt,
    xmin=pi-0.5,xmax=pi+0.5,
    ymin=-1.2,ymax=1.45,
    samples=50,
    xtick={pi-\dlt, pi, pi+\dlt},
    xticklabels={$\pi - \delta$, $\pi$, $\pi + \delta$},
    ytick={-1, 0, 1},
    yticklabels={$-\frac{2}{\delta^2}$, $0$, $\frac{2}{\delta^2}$},
    grid style={line width=.1pt, draw=gray!10}]

    \addplot[red, ultra thick, domain=0:pi-\dlt] (x, 0);
    \addplot[red, ultra thick, domain=pi+\dlt:4] (x, 0);
    \addplot[red, ultra thick, domain=pi-\dlt:pi] (x, -1);
    \addplot[red, ultra thick, domain=pi:pi+\dlt] (x, 1);

    \draw [dashed] (axis cs:{pi-\dlt},-2) -- (axis cs:{pi-\dlt},2);
    \draw [dashed] (axis cs:{pi},-2) -- (axis cs:{pi},2);
    \draw [dashed] (axis cs:{pi+\dlt},-2) -- (axis cs:{pi+\dlt},2);
    
    \node at (axis cs:2.9,1.1) {$g^{(2)}_1 \equiv \phi''_\delta$};

    \addplot[mark=*] coordinates {(pi-\dlt,0)};
    \addplot[mark=o] coordinates {(pi-\dlt,-1)};
    \addplot[mark=*] coordinates {(pi,-1)};
    \addplot[mark=o] coordinates {(pi,1)};
    \addplot[mark=*] coordinates {(pi+\dlt,1)};
    \addplot[mark=o] coordinates {(pi+\dlt,0)};
\end{axis}
\end{tikzpicture}
\undef\dlt
    \def\dlt{0.4}
\begin{tikzpicture}
\begin{axis}[
    width=250pt,height=110pt,
    xmin=pi-0.5,xmax=pi+0.5,
    ymin=-1.2,ymax=1.45,
    samples=50,
    xtick={pi-\dlt, pi, pi+\dlt},
    xticklabels={$\pi - \delta$, $\pi$, $\pi + \delta$},
    ytick={-1, 0, 1},
    yticklabels={$-\frac{4}{\delta^2}$, $0$, $\frac{4}{\delta^2}$},
    grid style={line width=.1pt, draw=gray!10}]

    \addplot[blue, ultra thick] coordinates {
        (0, 0) (pi-\dlt, 0) (pi-\dlt/2, -1) (pi+\dlt/2, 1) (pi+\dlt, 0) (4,0)
    };

    \draw [dashed] (axis cs:{pi-\dlt},-2) -- (axis cs:{pi-\dlt},2);
    \draw [dashed] (axis cs:{pi},-2) -- (axis cs:{pi},2);
    \draw [dashed] (axis cs:{pi+\dlt},-2) -- (axis cs:{pi+\dlt},2);
    
    \node at (axis cs:2.9,1.1) {$g^{(2)}_2$};
\end{axis}
\end{tikzpicture}
\undef\dlt
    \def\dlt{0.4}
\begin{tikzpicture}
\begin{axis}[
    width=250pt,height=110pt,
    xmin=pi-0.5,xmax=pi+0.5,
    ymin=-1.2,ymax=1.45,
    samples=50,
    xtick={pi-\dlt, pi, pi+\dlt},
    xticklabels={$\pi - \delta$, $\pi$, $\pi + \delta$},
    ytick={-1, 0, 1},
    yticklabels={$-\frac{8}{\delta^3}$, $0$, $\frac{8}{\delta^3}$},
    grid style={line width=.1pt, draw=gray!10}]

    \addplot[blue, ultra thick, domain=0:pi-\dlt] (x, 0);
    \addplot[blue, ultra thick, domain=pi-\dlt:pi-\dlt/2] (x, -1);
    \addplot[blue, ultra thick, domain=pi-\dlt/2:pi+\dlt/2] (x, 1);
    \addplot[blue, ultra thick, domain=pi+\dlt/2:pi+\dlt] (x, -1);
    \addplot[blue, ultra thick, domain=pi+\dlt:4] (x, 0);

    \draw [dashed] (axis cs:{pi-\dlt},-2) -- (axis cs:{pi-\dlt},2);
    \draw [dashed] (axis cs:{pi-\dlt/2},-2) -- (axis cs:{pi-\dlt/2},2);
    \draw [dashed] (axis cs:{pi},-2) -- (axis cs:{pi},2);
    \draw [dashed] (axis cs:{pi+\dlt/2},-2) -- (axis cs:{pi+\dlt/2},2);
    \draw [dashed] (axis cs:{pi+\dlt},-2) -- (axis cs:{pi+\dlt},2);
    
    \node at (axis cs:2.9,1.1) {$g^{(3)}_2$};

    \addplot[mark=*] coordinates {(pi-\dlt,0)};
    \addplot[mark=o] coordinates {(pi-\dlt,-1)};
    \addplot[mark=*] coordinates {(pi-\dlt/2,-1)};
    \addplot[mark=o] coordinates {(pi-\dlt/2,1)};
    \addplot[mark=*] coordinates {(pi+\dlt/2,1)};
    \addplot[mark=o] coordinates {(pi+\dlt/2,-1)};
    \addplot[mark=*] coordinates {(pi+\dlt,-1)};
    \addplot[mark=o] coordinates {(pi+\dlt,0)};
\end{axis}
\end{tikzpicture}
\undef\dlt
    \def\dlt{0.4}
\begin{tikzpicture}
\begin{axis}[
    width=250pt,height=110pt,
    xmin=pi-0.5,xmax=pi+0.5,
    ymin=-1.2,ymax=1.45,
    samples=50,
    xtick={pi-\dlt, pi, pi+\dlt},
    xticklabels={$\pi - \delta$, $\pi$, $\pi + \delta$},
    ytick={-1, 0, 1},
    yticklabels={$-\frac{16}{\delta^3}$, $0$, $\frac{16}{\delta^3}$},
    grid style={line width=.1pt, draw=gray!10}]

    \addplot[green, ultra thick] coordinates {
        (0, 0) (pi-\dlt, 0) (pi-3*\dlt/4, -1) (pi-\dlt/4, 1) (pi, 0) (pi+\dlt/4, 1) (pi+3*\dlt/4, -1) (pi+\dlt, 0) (4,0)
    };

    \draw [dashed] (axis cs:{pi-\dlt},-2) -- (axis cs:{pi-\dlt},2);
    \draw [dashed] (axis cs:{pi-\dlt/2},-2) -- (axis cs:{pi-\dlt/2},2);
    \draw [dashed] (axis cs:{pi},-2) -- (axis cs:{pi},2);
    \draw [dashed] (axis cs:{pi+\dlt/2},-2) -- (axis cs:{pi+\dlt/2},2);
    \draw [dashed] (axis cs:{pi+\dlt},-2) -- (axis cs:{pi+\dlt},2);
    
    \node at (axis cs:2.9,1.1) {$g^{(3)}_3$};
\end{axis}
\end{tikzpicture}
\undef\dlt
    \def\dlt{0.4}
\begin{tikzpicture}
\begin{axis}[
    width=250pt,height=110pt,
    xmin=pi-0.5,xmax=pi+0.5,
    ymin=-1.2,ymax=1.45,
    samples=50,
    xtick={pi-\dlt, pi, pi+\dlt},
    xticklabels={$\pi - \delta$, $\pi$, $\pi + \delta$},
    ytick={-1, 0, 1},
    yticklabels={$-\frac{64}{\delta^4}$, $0$, $\frac{64}{\delta^4}$},
    grid style={line width=.1pt, draw=gray!10}]

    \addplot[green, ultra thick, domain=0:pi-\dlt] (x, 0);
    \addplot[green, ultra thick, domain=pi-\dlt:pi-3*\dlt/4] (x, -1);
    \addplot[green, ultra thick, domain=pi-3*\dlt/4:pi-\dlt/2] (x, 1);
    \addplot[green, ultra thick, domain=pi-\dlt/2:pi-\dlt/4] (x, 1);
    \addplot[green, ultra thick, domain=pi-\dlt/4:pi] (x, -1);
    \addplot[green, ultra thick, domain=pi:pi+\dlt/4] (x, 1);
    \addplot[green, ultra thick, domain=pi+\dlt/4:pi+\dlt/2] (x, -1);
    \addplot[green, ultra thick, domain=pi+\dlt/2:pi+3*\dlt/4] (x, -1);
    \addplot[green, ultra thick, domain=pi+3*\dlt/4:pi+\dlt] (x, 1);
    \addplot[green, ultra thick, domain=pi+\dlt:4] (x, 0);

    \draw [dashed] (axis cs:{pi-\dlt},-2) -- (axis cs:{pi-\dlt},2);
    \draw [dashed] (axis cs:{pi-3*\dlt/4},-2) -- (axis cs:{pi-3*\dlt/4},2);
    \draw [dashed] (axis cs:{pi-\dlt/2},-2) -- (axis cs:{pi-\dlt/2},2);
    \draw [dashed] (axis cs:{pi-\dlt/4},-2) -- (axis cs:{pi-\dlt/4},2);
    \draw [dashed] (axis cs:{pi},-2) -- (axis cs:{pi},2);
    \draw [dashed] (axis cs:{pi+\dlt/4},-2) -- (axis cs:{pi+\dlt/4},2);
    \draw [dashed] (axis cs:{pi+\dlt/2},-2) -- (axis cs:{pi+\dlt/2},2);
    \draw [dashed] (axis cs:{pi+3*\dlt/4},-2) -- (axis cs:{pi+3*\dlt/4},2);
    \draw [dashed] (axis cs:{pi+\dlt},-2) -- (axis cs:{pi+\dlt},2);
    
    \node at (axis cs:2.8,1.1) {$g^{(4)}_3$};

    \addplot[mark=*] coordinates {(pi-\dlt,0)};
    \addplot[mark=o] coordinates {(pi-\dlt,-1)};
    \addplot[mark=*] coordinates {(pi-3*\dlt/4,-1)};
    \addplot[mark=o] coordinates {(pi-3*\dlt/4,1)};
    \addplot[mark=*] coordinates {(pi-\dlt/4,1)};
    \addplot[mark=o] coordinates {(pi-\dlt/4,-1)};
    \addplot[mark=*] coordinates {(pi,-1)};
    \addplot[mark=o] coordinates {(pi,1)};
    \addplot[mark=*] coordinates {(pi+\dlt/4,1)};
    \addplot[mark=o] coordinates {(pi+\dlt/4,-1)};
    \addplot[mark=*] coordinates {(pi+3*\dlt/4,-1)};
    \addplot[mark=o] coordinates {(pi+3*\dlt/4,1)};
    \addplot[mark=*] coordinates {(pi+\dlt,1)};
    \addplot[mark=o] coordinates {(pi+\dlt,0)};
\end{axis}
\end{tikzpicture}
\undef\dlt
    \def\dlt{0.4}
\begin{tikzpicture}
\begin{axis}[
    width=250pt,height=110pt,
    xmin=pi-0.5,xmax=pi+0.5,
    ymin=-1.2,ymax=1.45,
    samples=50,
    xtick={pi-\dlt, pi, pi+\dlt},
    xticklabels={$\pi - \delta$, $\pi$, $\pi + \delta$},
    ytick={-1, 0, 1},
    yticklabels={$-\frac{128}{\delta^4}$, $0$, $\frac{128}{\delta^4}$},
    grid style={line width=.1pt, draw=gray!10}]

    \addplot[yellow, ultra thick] coordinates {
        (0, 0) (pi-\dlt, 0) (pi-7*\dlt/8, -1) (pi-5*\dlt/8, 1) (pi-\dlt/2, 0) (pi-3*\dlt/8, 1) (pi-\dlt/8, -1) (pi+\dlt/8, 1) (pi+3*\dlt/8, -1) (pi+\dlt/2, 0) (pi+5*\dlt/8, -1) (pi+7*\dlt/8, 1) (pi+\dlt, 0) (4,0)
    };

    \draw [dashed] (axis cs:{pi-\dlt},-2) -- (axis cs:{pi-\dlt},2);
    \draw [dashed] (axis cs:{pi-3*\dlt/4},-2) -- (axis cs:{pi-3*\dlt/4},2);
    \draw [dashed] (axis cs:{pi-\dlt/2},-2) -- (axis cs:{pi-\dlt/2},2);
    \draw [dashed] (axis cs:{pi-\dlt/4},-2) -- (axis cs:{pi-\dlt/4},2);
    \draw [dashed] (axis cs:{pi},-2) -- (axis cs:{pi},2);
    \draw [dashed] (axis cs:{pi+\dlt/4},-2) -- (axis cs:{pi+\dlt/4},2);
    \draw [dashed] (axis cs:{pi+\dlt/2},-2) -- (axis cs:{pi+\dlt/2},2);
    \draw [dashed] (axis cs:{pi+3*\dlt/4},-2) -- (axis cs:{pi+3*\dlt/4},2);
    \draw [dashed] (axis cs:{pi+\dlt},-2) -- (axis cs:{pi+\dlt},2);
    
    \node at (axis cs:2.8,1.1) {$g^{(4)}_4$};
\end{axis}
\end{tikzpicture}
\undef\dlt
    \caption{Construction of the fourth derivative of $g_4 \in C^4$. To construct the derivative of the next function, we linearize the `last' derivative in order to make it continuous, and we do so by replacing the rectangles with triangles of the same area, so that the integral over $I = (\pi - \delta, \pi + \delta)$ is preserved. In order to obtain $g_4$, we will integrate four times, keeping in mind that every derivative has value $0$ at the origin, except for the first one, which has value $1/\pi$.}
    \label{fig:function-smoothening-1}
\end{figure*}
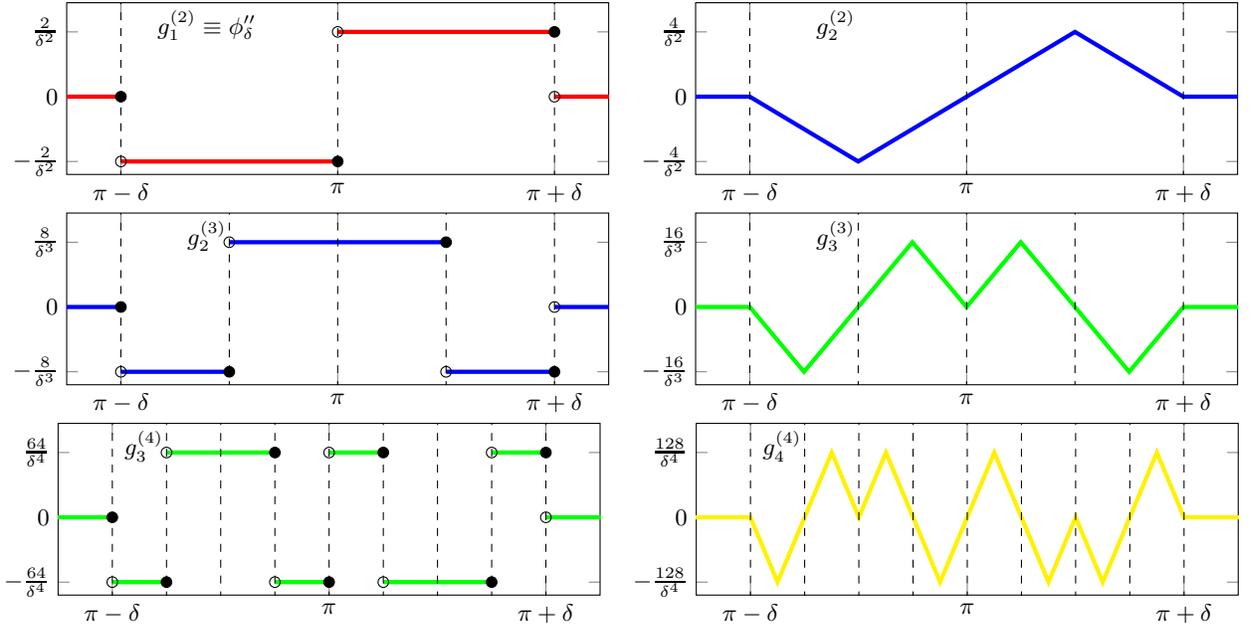

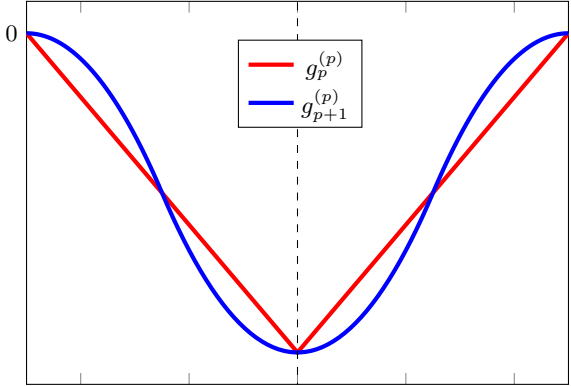
\begin{figure}
    \centering
    \def\dlt{0.4}
\begin{tikzpicture}
\begin{axis}[
    width=250pt,height=190pt,
    xmin=-0.5,xmax=+0.5,
    ymin=-0.55,ymax=0.05,
    samples=50,
    xtick={},
    xticklabels={},
    ytick={0},
    yticklabels={$0$},
    grid style={line width=.1pt, draw=gray!10},
    legend style={at={(0.39,0.9)}, anchor=north west}]

    \addplot[red, ultra thick] coordinates {
        (-0.5, 0) (0, -0.5) (0.5,0)
    };

    \addplot[blue, ultra thick, domain=-0.5:-0.25] (x, {-4*x*x-4*x-1});
    \legend{$g^{(p)}_p$, $g^{(p)}_{p+1}$}
    \addplot[blue, ultra thick, domain=-0.25:0.25] (x, {4*x*x-0.5});
    \addplot[blue, ultra thick, domain=0.25:0.5] (x, {-4*x*x+4*x-1});

    \draw [dashed] (axis cs:{0},-2) -- (axis cs:{0},2);
    
    \node at (axis cs:2.9,1.1) {$g^{(3)}_3$};
\end{axis}
\end{tikzpicture}
\undef\dlt
    \caption{Comparison between a triangle and its first-order smoothing. If this is the $j$-th triangle of $g^{(p)}_{p+1}$, then the interval of the plot is $I^{p-1}_j$, which is split into the two intervals $I^p_{2j}, I^p_{2j+1}$. One can see that the difference of the two functions is odd in each of the two sub-intervals, and even in $I^{p-1}_j$.}
    \label{fig:function-smoothening-parabola}
\end{figure}

We generalize the smoothing procedure shown in Section~\ref{sec:phase-extraction-problem}: one can see that $\phi'_\delta$ is a piecewise linear function.
\begin{claim}
    For $p \ge 1$, we can construct $g_p \in C^p$ that is $2\pi$-periodic, and
    \begin{align}
        \label{eq:general-construction-identity-condition}
        g_p(x) = \frac{x}{\pi}
    \end{align}
    for every $x \in (-\pi + \delta, \pi - \delta)$.
\end{claim}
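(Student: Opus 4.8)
The plan is induction on $p$, carrying along an explicit description of the top derivative $g_p^{(p)}$ on the transition windows. For the base case $p=1$, take $g_1:=\phi_\delta$ from Section~\ref{sec:phase-extraction-problem}: it is $2\pi$-periodic, lies in $C^1$, equals $x/\pi$ on $(-\pi+\delta,\pi-\delta)$, and — crucially for the induction — is \emph{odd about every odd multiple of $\pi$} (being odd and $2\pi$-periodic), with $g_1^{(2)}$ equal, on the window $I=(\pi-\delta,\pi+\delta)$, to the two-step function $-2/\delta^2$ on $(\pi-\delta,\pi)$ and $+2/\delta^2$ on $(\pi,\pi+\delta)$.

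For the inductive step, assume $g_p\in C^p$ has all the required properties and moreover that $g_p^{(p+1)}$ is, on $I$, piecewise constant on the $2^p$ dyadic subintervals $I^p_0,\dots,I^p_{2^p-1}$ of $I$ (equivalently, $g_p^{(p)}|_I$ is continuous and piecewise linear, matching the jet of $x/\pi$ at $\partial I$). I would define $g_{p+1}$ by setting $g_{p+1}^{(p+1)}:=T\!\left(g_p^{(p+1)}\right)$ on $I$, where $T$ replaces the constant value on each $I^p_j$ by the isosceles triangle over $I^p_j$ with the same integral and peak at the midpoint (the ``rectangles-to-equal-area-triangles'' move of Figures~\ref{fig:function-smoothening-1}--\ref{fig:function-smoothening-parabola}), extending $g_{p+1}^{(p+1)}$ by $0$ outside the transition windows, integrating $p+1$ times with the order-$p$ jet of $x/\pi$ at $\pi-\delta$ as left boundary data, and extending $2\pi$-periodically.

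Three verifications follow. (i) $g_{p+1}^{(p+1)}$ is continuous: adjacent triangles meet at value $0$ and the outermost ones vanish at $\partial I$, so it glues continuously to $0$ outside and $g_{p+1}\in C^{p+1}$ there (and $C^\infty$ on the sawtooth part). (ii) $g_{p+1}^{(p+1)}$ is again piecewise constant on the level-$(p+1)$ dyadic grid, so the inductive hypothesis carries over. (iii) $g_{p+1}$ agrees with $x/\pi$ on $(-\pi+\delta,\pi-\delta)$ and its translates: this is automatic outside the windows by construction, holds at $\pi-\delta$ by the choice of left boundary data, and at $\pi+\delta$ follows from oddness — since $T$ acts symmetrically inside each $I^p_j$ and both the dyadic grid and the sign pattern of $g_p^{(p+1)}$ are symmetric about $\pi$, the function $g_{p+1}^{(p+1)}$ has definite parity about $\pi$, hence the primitives are odd about $\pi$, which forces $g_{p+1}^{(k)}(\pi+\delta)=(-1)^{k+1}g_{p+1}^{(k)}(\pi-\delta)$ for every $k$ — exactly the jet of $x/\pi-2$ there; periodicity then handles the windows around the other odd multiples of $\pi$.

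The step I expect to be the real obstacle is (iii): making sure the $(p+1)$-fold antiderivative ``closes up'' to a genuinely $C^{p+1}$, $2\pi$-periodic function instead of accumulating a mismatch at $\pi+\delta$. The temptation is to write this out as a list of moment conditions on $T(g_p^{(p+1)})$, but most of those fail interval-by-interval and only hold after summing over the $I^p_j$; carrying the single oddness invariant is the clean way around it, and it is also why the isosceles triangle (equal area \emph{and} equal first moment about the midpoint of each $I^p_j$) is the right smoothing. If only the bare existence statement were wanted, one could instead Hermite-interpolate the two order-$p$ jets at $\pi\mp\delta$ by a polynomial of degree $\le 2p+1$ on $I$ and extend periodically; but the recursive construction above is the one that comes with the explicit control of $\|g_p^{(p)}\|$ and of its Lipschitz constant that the rest of this section needs.
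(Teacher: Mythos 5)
Your construction is the paper's (same base case $g_1=\phi_\delta$, same rectangles-to-equal-area-triangles recursion on the dyadic grid of $I$), and your steps (i)--(ii) are fine, but step (iii) — the one you yourself flag as the crux — has a real gap. Knowing that $g_{p+1}^{(p+1)}$ has definite parity about $\pi$ does \emph{not} imply that its primitives anchored by the left jet at $\pi-\delta$ are odd about $\pi$: each integration introduces a constant fixed by the left boundary data, not by a condition at $\pi$, so only every other primitive inherits a parity for free. Concretely, matching the jet of $x/\pi-2$ at $\pi+\delta$ is equivalent to the vanishing of the first $p+1$ moments $\int_I (t-\pi)^k\,\Delta(t)\,dt$, $k=0,\dots,p$, of $\Delta:=T\!\left(g_p^{(p+1)}\right)-g_p^{(p+1)}$. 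Parity about $\pi$ kills only the moments of one parity of $k$; equal area and equal first moment per subinterval kill $k=0,1$; but already $k=2$ survives both: on a subinterval of half-width $a$ where the rectangle has height $c$, the centered second moment of (triangle $-$ rectangle) is $c a^3/3 - 2ca^3/3=-ca^3/3\neq 0$, and its vanishing after summing over $j$ requires $\sum_j c_j=0$, i.e.\ $\int_I g_p^{(p+1)}=0$ — a consequence of the inductive jet conditions, not of parity. Higher moments need further weighted sums $\sum_j m_j^l c_j$ to vanish, which is exactly the multi-scale cancellation your single global invariant cannot see. The paper's proof carries the finer invariant that $g^{(k)}_{p+1}-g^{(k)}_p$ is odd on \emph{each} dyadic subinterval $I^k_j$, with the grid coarsening by one level per integration (two adjacent odd pieces glue into an even piece vanishing at the grid points, whose primitive is odd on the union); that cascade is the missing ingredient, and your argument needs it (or an explicit verification of all $p+1$ moment identities) to close.

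Two smaller points. First, your parenthetical alternative — Hermite-interpolating the two order-$p$ jets of $x/\pi$ at $\pi-\delta$ and of $x/\pi-2$ at $\pi+\delta$ by a degree-$(2p+1)$ polynomial on $I$ and extending periodically — \emph{does} prove the claim as literally stated, more simply than the paper; it just does not produce the piecewise structure and the explicit Lipschitz recursion for $K_p$ used in Eq.~(\ref{eq:lipschitz-constant-recurrence-solution}), which is why the paper (and you) pursue the recursive construction. Second, your parity bookkeeping for the top derivative itself is correct ($T$ does preserve parity about $\pi$, since the grid and the sign pattern are symmetric); it is only the passage from the top derivative to its anchored antiderivatives that fails.
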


\begin{proof}
    Note that $g_1 \equiv \phi_\delta$ gives a base for our inductive construction. First of all, Eq.~(\ref{eq:general-construction-identity-condition}) holds iff the derivatives of $g_p$ satisfy:
    \begin{align}
        \label{eq:smoothening-derivative-values}
        g^{(k)}_p(x) & = 
        \begin{cases}
            \frac{1}{\pi} & k = 1 \\
            0 & k \neq 1
        \end{cases}
    \end{align}
    for $x \in (-\pi + \delta, \pi - \delta)$ and every $k \ge 0$. By this condition, we are only allowed to change the behaviour of the function outside of this domain, thus from now on we will only consider the interval $I = (\pi-\delta, \pi+\delta)$.
    
    Starting from $g^{(1)}_1$, which is continuous and piecewise linear, its derivative, $g^{(2)}_1$ will be piecewise constant, i.e., its integral will be given by a sequence of rectangles (Figure~\ref{fig:function-smoothening-1}, top left). We construct the second derivative of the next function $g^{(2)}_2$ by replacing the rectangles of $g^{(1)}_2$ by triangles of the same area (Figure~\ref{fig:function-smoothening-1}, top right). Then $g^{(3)}_2$ will be again piecewise constant, where the number of intervals in which this function is subdivided is twice the intervals of our starting function $g^{(2)}_1$. In general, $g^{(p)}_{p+1}$ will be piecewise constant, $g^{(p+1)}_{p+1}$ is obtained with the above procedure, and its pieces divide $I$ into $2^p$ equal segments. At this point, $g_p$ is easily defined by integrating the constructed derivatives $p+1$ times, with boundary conditions given by Eq.~(\ref{eq:smoothening-derivative-values}):
    \begin{align*}
        g_{p+1}(x) = \int_0^x \left( \frac{1}{\pi} + \underbrace{\int_0^x \cdots \int_0^x}_{p} g^{(p+1)}_{p+1}(x) \ dx^p \right) dx\ .
    \end{align*}
    Figure~\ref{fig:function-smoothening-1} shows the construction of the fourth derivative of $g_4$. We now prove all the claimed properties: $g_p \in C^p$ by construction, and we never changed the derivatives within $\pm (\pi - \delta)$, thus Eq.~(\ref{eq:smoothening-derivative-values}) is preserved (and so is also (\ref{eq:general-construction-identity-condition})).
    We only need to prove periodicity: the key here is to notice that $g^{(2)}_p$ satisfies
    \begin{align*}
        \int_I g^{(2)}_p(t) \ dt = 0 \ ,
    \end{align*}
    implying that $g^{(1)}_p$ is $\frac{1}{\pi}$ at the boundaries of $I$. This because $g^{(2)}_p$ has the same form as $g^{(2)}_1$ in Figure~\ref{fig:function-smoothening-1}, except for the fact that, instead of having two opposite rectangles, we have two opposite (smoother) shapes which will cancel out with one another in the integral (in the case of $p = 2$, these shapes are the triangles as in Figure~\ref{fig:function-smoothening-1}, top right). This shows that $g^{(1)}_p$ is $2\pi$-periodic. It is now sufficient to prove
    \begin{align}
        \label{eq:smoothening-ultimate-integral-condition}
        g_p(2\pi) - g_p(0) = \int_0^{2\pi} g^{(1)}_p(t) \ dt = 0
    \end{align}
    in order to prove periodicity of $g_p$.
    This is already true for $p = 1$, by design: in this case the shape of $g^{(1)}_1$ in $I$ is one big triangle pointing downwards, whose area was chosen in order to satisfy (\ref{eq:smoothening-ultimate-integral-condition}).

    For $p > 1$, this triangle is replaced by a smoother shape, and it is sufficient to prove that such shape has the same area as the original triangle in order to preserve the above integral. Thus we prove the following by induction:
    \begin{align*}
        \int_I g^{(1)}_{p+1}(t) - g^{(1)}_p(t) \ dt = 0
    \end{align*}
    so that by telescoping and linearity of integral the claim will follow. We introduce some notation: for $0 \le j < 2^k$, $I^k_j$ is the $j$-th slice of $I$ (starting from its left endpoint $\pi - \delta$) after dividing it into $2^k$ sub-intervals. Notice that it holds that $I^{k-1}_j = I^k_{2j} \cup I^k_{2j+1}$, for $0 \le j < 2^{k-1}$. When we say that a function is even (odd) in some interval, we mean that the function restricted to that interval is symmetric (anti-symmetric) with respect to the middle point.
    
    As a base for an inductive argument, one can see that the function $g^{(p)}_{p+1} - g^{(p)}_p$ is odd in $I^p_j$, for every $j$: this because $g^{(p)}_p$ is the side of a triangle in $I^p_{j}$, $g^{(p)}_{p+1}$ is the side of a parabolic bell and we can directly check that the difference is odd in $I^p_{j}$ (see Figure~\ref{fig:function-smoothening-parabola}).
    
    Assuming that $g^{(k)}_{p+1} - g^{(k)}_p$ is odd in $I^k_j$ for every $j$ and some $1 < k \le p$, we have that
    \begin{align*}
        \int_{L(I^k_j)}^{R(I^k_j)} g^{(k)}_{p+1}(t) - g^{(k)}_p(t) \ dt = 0 \ ,
    \end{align*}
    where $L(J), R(J)$ are, respectively, the left and right endpoints of an interval $J$. The function $g^{(k)}_p$ is a sequence of (smoothed) triangles in $I$, but this shape must be equal to $0$ at the endpoints: this because these shapes are symmetric and all equal up to sign by construction and the left endpoint of the leftmost shape must preserve continuity with the constant function outside of $I$, in both $g^{(k)}_p$ and its smoothed version $g^{(k)}_{p+1}$. Hence, at the base of the shapes the difference is $0$.
    One of $L(I^k_j), R(I^k_j)$, corresponds to one of the base points of the shape, while the other is the peak (which one depends on whether $I^k_j$ contains the first or second half of the shape, i.e., the parity of $j$). By oddness in this interval, also at the peak the difference is $0$. Considering the leftmost shape, the function $g^{(k)}_{p+1} - g^{(k)}_p$ is even in $I^{k-1}_0 = I^k_0 \cup I^k_1$. Therefore,
    \begin{align*}
        g^{(k-1)}_{p+1} - g^{(k-1)}_p & = \int_{0}^x g^{(k)}_{p+1}(t) - g^{(k)}_p(t) \ dt \\
        & = \int_{L(I^k_0)}^x g^{(k)}_{p+1}(t) - g^{(k)}_p(t) \ dt \\
        & = \int_{R(I^k_0)}^x g^{(k)}_{p+1}(t) - g^{(k)}_p(t) \ dt \ ,
    \end{align*}
    and the last expression is clearly odd in $I^{k-1}_0$, since $R(I^k_0)$ is the middle point.
    
    By a simple induction on $j$ the result can be extended to every shape, implying that $g^{(k-1)}_{p+1} - g^{(k-1)}_p$ is odd in $I^{k-1}_j$ for every $j$, and in particular, $g^{(1)}_{p+1} - g^{(1)}_p$ is odd in $I^1_0, I^1_1$, and both integrals cancel out, giving zero also on their union, which is the whole interval $I$.
\end{proof}
All we need now is to bound the Lipschitz constant for $g^{(p)}_p$: we already know that $g^{(1)}_1$ is $K_1$-Lipschitz with
\begin{align*}
    K_1 = \frac{2}{\delta^2}
\end{align*}
since its derivative $g^{(2)}_1$ is bounded by this number in absolute value. If $K_p$ is the Lipschitz constant of $g^{(p)}_p$, then
$$K_p = \sup |g^{(p+1)}_p| \ .$$
In order to obtain the same integral when we transform the rectangles into triangles, we need twice the height, thus the triangles of $g^{(p+1)}_{p+1}$ are as high as $2 K_p$. The slopes of these triangles are then $2 K_p$ over the length of a single segment which is $\frac{2 \delta}{2^{p+1}}$. Therefore, the Lipschitz constant for $g^{(p+1)}_{p+1}$ is
\begin{align*}
    K_{p+1} = \frac{2 K_p 2^{p+1}}{2 \delta} = \frac{K_p 2^{p+1}}{\delta} \ ,
\end{align*}
and this recurrence relation has unique solution
\begin{align}
    \label{eq:lipschitz-constant-recurrence-solution}
    K_p = \frac{\sqrt{2}^{p(p+1)}}{\delta^{p+1}} \ .
\end{align}
In conclusion, we can now use Theorem~\ref{thm:jackson-rate} to extend Theorem~\ref{thm:phase-extraction-jackson-rate} with a similar argument.
\begin{theorem}
    \label{thm:phase-extraction-jackson-rate-enhanced}
    Let $p > 0$ be a fixed constant. The function $\bar{g}_p : U(1) \rightarrow \R$ defined as
    \begin{align*}
        \bar{g}_p(e^{ix}) = g_p(x)
    \end{align*}
    for every $x \in \R$ can be $\epsilon$-approximated on the unit circle using a polynomial of degree
    \begin{align*}
        d = \Tilde{\bigO}\left(\frac{A_p \sqrt{2}^p}{\delta} \left(\frac{1}{\epsilon}\right)^{\frac{1}{p+1}} \right).
    \end{align*}
\end{theorem}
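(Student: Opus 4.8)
The plan is to mirror the proof of Theorem~\ref{thm:phase-extraction-jackson-rate}, replacing the function $\phi_\delta = g_1$ by the smoother function $g_p$ built in the preceding Claim and invoking Theorem~\ref{thm:jackson-rate} with the larger smoothness index $p$ together with the Lipschitz bound recorded in Eq.~(\ref{eq:lipschitz-constant-recurrence-solution}). The heavy lifting has already been done: the Claim guarantees $g_p \in C^p$ and $2\pi$-periodic, and the recurrence analysis shows that $g^{(p)}_p$ is $K_p$-Lipschitz with $K_p = \sqrt{2}^{\,p(p+1)}/\delta^{p+1}$. So what remains is essentially bookkeeping.

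First I would let $S_{p,d}(x) = \sum_{k=-d}^{d} c_k e^{ikx}$ be the $d$-th degree Fourier sum of $g_p$. Applying Theorem~\ref{thm:jackson-rate} with this $p$ and $K = K_p$ gives
\begin{align*}
    \| S_{p,d} - g_p \|_{\R} \le K_p \frac{A_p \log d}{d^{p+1}} = \Tilde{\bigO}\!\left( \frac{K_p A_p}{d^{p+1}} \right),
\end{align*}
where, as in Theorem~\ref{thm:phase-extraction-jackson-rate}, the $\log d$ factor is absorbed into $\Tilde{\bigO}$ and, since $p$ is a fixed constant, the $p$-dependent constants are harmless.

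Next I would set the right-hand side $\le \epsilon$. Solving the (transcendental) condition $d^{p+1} = \Tilde{\bigO}(K_p A_p / \epsilon)$ yields $d = \Tilde{\bigO}\big( (K_p A_p)^{1/(p+1)} \epsilon^{-1/(p+1)} \big)$; substituting $K_p^{1/(p+1)} = \sqrt{2}^{\,p}/\delta$ and bounding $A_p^{1/(p+1)} \le A_p$ (which we may assume since $A_p \ge 1$) gives the displayed degree
\begin{align*}
    d = \Tilde{\bigO}\!\left( \frac{A_p \sqrt{2}^{\,p}}{\delta} \left( \frac{1}{\epsilon} \right)^{\frac{1}{p+1}} \right).
\end{align*}
Finally I would close the argument exactly as before: replacing $e^{ix} = z$ in $S_{p,d}$ produces a Laurent polynomial $\bar S_{p,d}$ with $\|\bar S_{p,d} - \bar g_p\|_{U(1)} = \|S_{p,d} - g_p\|_{\R}$, so $\bar g_p$ is $\epsilon$-approximated on the unit circle by a degree-$d$ Laurent polynomial with $d$ as above.

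The only genuine obstacle here is verifying that the logarithmic factor and the $p$-dependent constants really can be swept into $\Tilde{\bigO}$ without degrading the claimed exponent $1/(p+1)$ — i.e., that the self-referential bound $d^{p+1}=\Tilde{\bigO}(K_pA_p\log d/\epsilon)$ is satisfied by a $d$ of the stated order. This is the same soft-$O$ manipulation used in Theorem~\ref{thm:phase-extraction-jackson-rate}, and it is the point where one should be careful that the hidden polylogarithmic and $p$-dependent factors are not masking a dependence that would matter when $p$ is later allowed to grow; here, however, $p$ is fixed, so the step is routine.
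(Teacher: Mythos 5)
Your proof is correct and is exactly the argument the paper intends: the paper gives no separate proof of this theorem, stating only that it follows from Theorem~\ref{thm:jackson-rate} ``with a similar argument'' to Theorem~\ref{thm:phase-extraction-jackson-rate}, and your write-up fills in precisely that argument, with the right substitution $K_p^{1/(p+1)} = \sqrt{2}^{\,p}/\delta$. The only cosmetic liberty is bounding $A_p^{1/(p+1)}$ by $A_p$, which is harmless since $p$ is fixed and the theorem statement itself carries the looser factor $A_p$.
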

As a Corollary, for a constant $p$ Algorithm~\ref{alg:prop-sampling-qsp} can be done in only
\begin{align*}
    \Tilde{\bigO}\left( \frac{1}{\bar{c}^{\frac{1}{2} + \frac{1}{p}}} \sqrt[p]{\frac{1}{\epsilon}} \right)
\end{align*}
total oracle queries.

\section{Conclusions}
We consider the phase-extraction problem, and we showed that, given a unitary $U = e^{i\pi H}$ and its inverse $U^{\dag}$, we could implement a block-encoding of $\phi(H)$ for some smooth function $\phi(x)$. The word `smooth' here means existence and continuity of the derivatives: the higher the number of continuous derivatives that a function has, the faster its Fourier sum (and thus the Laurent polynomial on the eigenphases) uniformly converges to that function. We are confident this can have many more applications beyond what is shown in this work. It is also worth remarking that Jackson showed that the convergence rate of a Fourier series is almost-optimal, in the sense that no trigonometric (or, equivalently, complex exponential) series can approximate the desired function faster, up to that $\log d$ factor~\cite[p.\ 21]{jacksonTheoryApproximation1930a}. Also remember that `smoothing' a function, i.e., replacing its derivative with a continuous function, does not give faster convergence for free in general, as its derivative will become steep in the points where we smooth out discontinuities, and this translates to a high Lipschitz constant: a~clear example is given by Eq.~\ref{eq:lipschitz-constant-recurrence-solution}, but in that case, fortunately, nothing depends on the size of the input $N$, and thus does not influence the asymptotic query complexity of Algorithm~\ref{alg:prop-sampling-qsp}, although the constant factor can become large even for $p = 20$. From a theoretical point of view, this work shows that, for any $\eta > 0$, there is an algorithm with query complexity 
$$\Tilde{\bigO}\left(\frac{1}{\bar{c}^{\frac{1}{2} + \eta}} \frac{1}{\epsilon^\eta} \right)$$
solving the proportional-sampling problem. This statement seems to suggest there exists an algorithm which directly solves the problem with $\eta = 0$, and an open question would be to find such algorithm.

It is also interesting to remark that Theorems~\ref{thm:haah-construction},~\ref{thm:haah-completion} indeed allow the construction for any $\phi$, even complex-valued, provided that its absolute value is reciprocal.

One could think that, in Section~\ref{sec:prop-sampling}, instead of using the linear function in the phase-extraction subroutine, we could approximate the square root and then apply the transformation directly on $e^{i \pi c(x)}$. However, in the case of proportional sampling this would be inconvenient, as the derivative of the square root function has a discontinuity with an infinite jump around 0, and we could not choose a constant $\delta$ if we had values of the oracle that are too close to $0$.

\section*{Acknowledgements}
I would like to thank William Schober and Stefan Wolf for insightful feedback and discussions. This work was supported by the Swiss National Science Foundation (SNF), grant No. \texttt{200020\_182452}.

\appendix

\section{Haah's Construction of QSP Polynomials}
\label{apx:haah-construction}

Here we briefly review the Quantum Signal Processing technique, in its Laurent-polynomial formulation. We suggest~\cite{martynGrandUnificationQuantum2021, haahProductDecompositionPeriodic2019, chaoFindingAnglesQuantum2020} for a more comprehensive discussion on the topic. Given an arbitrary unitary $W$, consider the following matrix:
\begin{align*}
    \Tilde{W} = \begin{bmatrix}
        W & 0 \\
        0 & W^\dag
    \end{bmatrix} .
\end{align*}
This corresponds to a circuit that applies $W$ or its inverse $W^\dag$, depending on the state of a control qubit. If we feed an eigenstate $\ket{\theta}$ with associated eigenvalue $w = e^{i\theta}$, by a phase kickback the control qubit undergoes a unitary transformation of the form
\begin{align*}
    \Tilde{w} = \begin{bmatrix}
        w & 0 \\ 0 & w^{-1}
    \end{bmatrix} .
\end{align*}
This is the idea behind a technique called \emph{qubitization}~\cite{lowHamiltonianSimulationQubitization2019}, where one can apply Quantum Signal Processing polynomials simultaneously on all the eigenvalues of a unitary given as a black-box (in our case $W$), giving birth to the so-called \emph{quantum eigenvalue transform}. From now on we consider $w$ as an arbitrary unitary eigenvalue, and we want to understand which transformations can be realized. One can see that a unitary of the form
\begin{align}
    \label{eq:haah-primitive-decomposition}
    U(\Tilde{w}) = Q_n \cdot \Tilde{w} \cdot Q_n^\dag \cdots Q_1 \cdot \Tilde{w} \cdot Q_1^\dag \cdot Q_0
\end{align}
gives a $2$-by-$2$ matrix of \emph{Laurent polynomials} (or equivalently, a Laurent polynomial with $2$-by-$2$ matrices as coefficients), and the maximum degree of this polynomial is $n$. Roughly speaking, we construct a circuit where the $Q_i$'s act on the control qubit and $\Tilde{w}$ is replaced with the controlled unitary $\Tilde{W}$. We are mainly interested in two sub-algebras of the matrices above.
\begin{definition}
    The \emph{Haah algebra} $H$ is the sub-algebra of the ring of Laurent polynomials over $2$-by-$2$ complex matrices of the form
    \begin{align*}
        a(\Tilde{w}) + b(\Tilde{w}) \cdot iX + c(\Tilde{w}) \cdot iY + d(\Tilde{w}) \cdot iZ
    \end{align*}
    where $a, b, c, d$ are polynomials with real coefficients.
\end{definition}
\noindent In other words, the Haah algebra is the polynomial ring $H \simeq \R[iX, iZ, \Tilde{w}]$. An important sub-algebra of the Haah algebra, which is also the most interesting one for practical implementations is the \emph{Low algebra}.
\begin{definition}
    The \emph{Low algebra} $L$ is the sub-algebra of the ring of Laurent polynomials over $2$-by-$2$ complex matrices of the form
    \begin{align*}
        a(\Tilde{w}) + b(\Tilde{w}) \cdot iX
    \end{align*}
    where $a, b$ are Laurent polynomials with real coefficients.
\end{definition}
\noindent Analogously as before we have $L \simeq \R[iX, \Tilde{w}]$.

\begin{theorem}[Haah~\cite{haahProductDecompositionPeriodic2019}]
    \label{thm:haah-construction}
    An element $U(w) \in H$ that is unitary and has definite parity (i.e.\ $a, b, c, d$ are either all even or all odd Laurent polynomials) is always decomposable as in (\ref{eq:haah-primitive-decomposition}), in a unique way (up to global phase).

    Furthermore, if $U(w) \in L$ then all $Q_i \in L$. In particular, they will be $X$-rotations.
\end{theorem}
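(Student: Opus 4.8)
\emph{The plan.} I would prove existence by induction on the degree $n$, using a ``peeling'' step that factors a single copy of $Q_n\Tilde{w}Q_n^\dagger$ off the left of $U$ and leaves a degree-$(n-1)$ element of the same type; uniqueness falls out of the same extremal-coefficient bookkeeping; and the Low-algebra refinement follows by checking that every peeling step stays inside $L$. Write $U(w)=\sum_{k=-n}^{n}C_k w^k$ with $C_k\in\C^{2\times2}$ and $|w|=1$, and note $\Tilde{w}=w\proj{0}+w^{-1}\proj{1}$, so that for a constant unitary $Q$, writing $P:=Q\proj{0}Q^\dagger$ for the rank-one projector onto $Q\ket{0}$, one has $Q\Tilde{w}Q^\dagger=wP+w^{-1}(I-P)$ and $Q\Tilde{w}^{-1}Q^\dagger=w^{-1}P+w(I-P)$.

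\emph{Existence.} The first thing I would establish is a determinant identity: for any $U\in H$, a Pauli-basis expansion shows that $\det U(w)$, viewed as a Laurent polynomial, is fixed by the substitution $w\mapsto 1/w$ combined with coefficient conjugation (the reciprocal symmetry forced by $a,b,c,d$ having real coefficients); together with unitarity on the circle this yields $\det U(w)^2\equiv 1$, hence $\det U\equiv\pm1$. Granting this, I induct on $n$. For $n=0$, $U$ is a constant unitary and $Q_0:=U$. For $n\ge1$: matching the $w^{\pm2n}$ coefficients of $U^\dagger U=I$ gives $C_{-n}^\dagger C_n=0=C_n^\dagger C_{-n}$, so $\mathrm{range}(C_n)\perp\mathrm{range}(C_{-n})$; and the $w^{2n}$ coefficient of $\det U$ is $\det C_n$, which vanishes, so $C_n$ (and likewise $C_{-n}$) has rank at most one. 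They are not both zero since $U$ has degree $n$, so I take $\ell$ to be the one-dimensional column space of whichever is nonzero (namely $\ell=\mathrm{range}(C_n)$ if $C_n\ne0$, else $\ell=\mathrm{range}(C_{-n})^\perp$), let $P$ be the orthogonal projector onto $\ell$, and pick a constant unitary $Q_n$ with $Q_n\proj{0}Q_n^\dagger=P$. Then $U'(w):=(Q_n\Tilde{w}Q_n^\dagger)^{-1}U(w)=(w^{-1}P+w(I-P))U(w)$ has $w^{n+1}$-coefficient $(I-P)C_n=0$ (because $PC_n=C_n$) and $w^{-(n+1)}$-coefficient $PC_{-n}=0$ (because $\mathrm{range}(C_{-n})\subseteq\ell^\perp$, trivially so if $C_{-n}=0$), hence $\deg U'\le n-1$; $U'$ is unitary; $U'$ has the opposite parity to $U$, hence definite parity; and $U'$ is again in $H$, because $Q_n\Tilde{w}Q_n^\dagger=\tfrac{w+w^{-1}}{2}I+\tfrac{w-w^{-1}}{2}R$ with $R:=Q_nZQ_n^\dagger$ a traceless Hermitian matrix, i.e.\ a real combination of $X,Y,Z$, and each of $\tfrac{w-w^{-1}}{2}Z=\tfrac12(\Tilde{w}-\Tilde{w}^{-1})$, $\tfrac{w-w^{-1}}{2}X=\tfrac12(\Tilde{w}(iY)-(iY)\Tilde{w})$, $\tfrac{w-w^{-1}}{2}Y=-\tfrac12(\Tilde{w}(iX)-(iX)\Tilde{w})$ lies in $H$. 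The inductive hypothesis applied to $U'$ gives $U'=Q_{n-1}\Tilde{w}Q_{n-1}^\dagger\cdots Q_0$, and prepending $Q_n\Tilde{w}Q_n^\dagger$ finishes the step.

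\emph{Uniqueness and the Low algebra.} For uniqueness: in the step above $\ell$, and hence $P$ and the factor $Q_n\Tilde{w}Q_n^\dagger$, is determined by $C_n,C_{-n}$ alone — the residual freedom $Q_n\mapsto Q_nD$ with $D$ a diagonal unitary does not change $Q_n\Tilde{w}Q_n^\dagger$ since $D\Tilde{w}D^\dagger=\Tilde{w}$ — so $U'$ is determined and induction applies, with $Q_0$ at the bottom pinned down up to the global phase. For the Low case: if $U\in L$ then $U(w)=\bigl(\begin{smallmatrix}a(w)&ib(w)\\ ib(1/w)&a(1/w)\end{smallmatrix}\bigr)$ with $a,b$ real-coefficient Laurent polynomials, so a column of $C_n$ (or $C_{-n}$) has the form $(r_1,ir_2)^{\mathsf T}$ with $r_1,r_2\in\R$; such a vector spans the $+1$-eigenline of a reflection $\cos(2\alpha)Z+\sin(2\alpha)Y$ in the $\{Y,Z\}$-plane, so $Q_n$ may be taken to be the $X$-rotation $e^{i\alpha X}$, and then $Q_n\Tilde{w}Q_n^\dagger=\tfrac{w+w^{-1}}{2}I+\tfrac{w-w^{-1}}{2}(\cos(2\alpha)Z+\sin(2\alpha)Y)\in L$ by the identities above; hence $U'\in L$, the induction stays inside $L$, and since the degree-$0$ unitaries of $L$ are exactly the $X$-rotations $\cos\alpha\,I+\sin\alpha\,iX$, all $Q_i$ are $X$-rotations.

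\emph{The main obstacle.} I expect the crux to be the ``stays inside $H$'' part of the peeling step: showing that the peeling direction $R=Q_nZQ_n^\dagger$ — which is dictated by the top Laurent coefficient of $U$ — is compatible with the generators of $H$, so that $Q_n\Tilde{w}Q_n^\dagger$, and therefore $(Q_n\Tilde{w}Q_n^\dagger)^{-1}U$, remains in $H$ (with the right parity); the analogous point for $L$, namely that the reciprocal form of $a(\Tilde{w})+b(\Tilde{w})iX$ forces the peeling direction into the $X$-rotation family, is the algebraic heart of the second statement. The preliminary identity $\det U\equiv\pm1$ — which is what makes the extreme coefficients rank-deficient — is the other place where $H$-membership, rather than mere unitarity, does real work.
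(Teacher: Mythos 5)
The paper does not actually prove this theorem: in Appendix~\ref{apx:haah-construction} it is imported from Haah~\cite{haahProductDecompositionPeriodic2019} as a cited black box, and only the companion completion statement (Theorem~\ref{thm:haah-completion}) is proved there. What you wrote is, in substance, Haah's own peeling induction, and it checks out: the conjugate-reciprocal symmetry of $\det U(w)$ (this is where the real coefficients of $a,b,c,d$, i.e.\ $H$-membership rather than mere unitarity, genuinely enter) together with $|\det U|=1$ on the circle gives $\det U\equiv\pm1$, hence $\det C_{\pm n}=0$; the orthogonality $C_{-n}^\dagger C_n=0$ then forces the projector $P$, which drives both existence and uniqueness; and your identities showing $Q_n\Tilde{w}Q_n^\dagger=\tfrac{w+w^{-1}}{2}I+\tfrac{w-w^{-1}}{2}R\in H$, and $\in L$ precisely when $R$ lies in the $Y$--$Z$ plane so that $Q_n$ can be taken to be $e^{i\alpha X}$, are exactly the closure facts the induction needs.

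Two spots deserve tightening, though neither is fatal. First, killing the $w^{\pm(n+1)}$ coefficients of $U'$ only yields $\deg U'\le n$; it is the definite (opposite) parity of $U'$ — which you state but do not invoke at that point — that eliminates the $w^{\pm n}$ coefficients and gives $\deg U'\le n-1$. The same bookkeeping, plus the remark that $PC_n=C_n\neq 0$ cannot be cancelled by $(I-P)C_{n-2}$ (their ranges lie in $\ell$ and $\ell^\perp$ respectively), shows the degree drops by exactly one, so a degree-$n$ element consumes exactly $n$ factors, which the uniqueness claim tacitly presupposes. Second, once the factors $Q_i\Tilde{w}Q_i^\dagger$ are fixed as matrices, $Q_0$ is determined outright, not merely up to phase: the ``up to global phase'' freedom in the statement lives in the choice of the $Q_i$ themselves (right multiplication by diagonal unitaries leaves $Q_i\Tilde{w}Q_i^\dagger$ invariant), as you note. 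In the Low case you should also record the $C_n=0$ subcase: the orthogonal complement of a line spanned by $(r_1,ir_2)^{\mathsf T}$ is again of that form, so the peeled factor is still generated by an $X$-rotation. With these small additions your reconstruction is a sound proof of the cited result, following the same route as the original source.
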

\noindent The Low algebra is useful exactly because it is practical, as we only have to define the phases of the $X$-rotations. The decomposition using the Low algebra thus simply reduces to the $W_z$-convention of the traditional formulation of the QSP~\cite{lowMethodologyResonantEquiangular2016,martynGrandUnificationQuantum2021}. On the other hand, while Low algebra only allows the implementation of Laurent polynomials with real coefficients, Haah algebra can even represent polynomials with complex coefficients.

In applications, we are usually interested in a single transformation of the unitary, i.e., given $\Tilde{W}$ we would like to implement $f(W)$ for some Laurent polynomial $f$ that satisfies $|f|^2 \le 1$ on the unit circle. In order to do this we need an element $F(\Tilde{w})$ of the Haah algebra containing $f$ as a particular entry, which determines the initial state and the state to post-select on the control qubit. Here we will only use the top-left entry, $f(w) = \bra{0} F(\Tilde{w}) \ket{0}$, i.e.\ we will start and post-select $\ket{0}$. The question now is whether we can find the polynomials $a, b, c, d$ such that we obtain $f$ in the top-left entry.

\begin{theorem}[Haah's completion~\cite{haahProductDecompositionPeriodic2019}, restated]
    \label{thm:haah-completion}
    Given real-valued polynomials $a(w), d(w) : U(1) \rightarrow \R$, let 
    $$f(w) = a(w) + i d(w)$$
    be a function such that the polynomial $|f|^2 = a^2 + d^2$ is reciprocal (i.e.\ $|f(w)|^2 = |f(w^{-1})|^2$), has real coefficients and satisfies $|f(w)|^2 < 1$ on the unit circle.
    There exist polynomials $b(w), c(w)$ such that
    \begin{align*}
        a(\Tilde{w}) + b(\Tilde{w}) \cdot iX + c(\Tilde{w}) \cdot iY + d(\Tilde{w}) \cdot iZ \in SU(2)
    \end{align*}
    for every $t \in U(1)$.
\end{theorem}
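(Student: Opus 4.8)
The plan is to collapse the matrix identity to a single scalar equation on the unit circle and then settle it with a classical spectral-factorization argument; this is the natural generalization of the $c\equiv d\equiv 0$ (Low-algebra) special case, where one factors $1-a^2$ over its roots inside the disc. First I would write out the $2\times 2$ matrix that the candidate Haah-algebra element represents. With $\Tilde{w} = \operatorname{diag}(w, w^{-1})$, the four summands $a(\Tilde{w})$, $b(\Tilde{w})\cdot iX$, $c(\Tilde{w})\cdot iY$, $d(\Tilde{w})\cdot iZ$ assemble into
\begin{align*}
M(w) = \begin{pmatrix} a(w) + i\,d(w) & c(w) + i\,b(w) \\ -\,c(w^{-1}) + i\,b(w^{-1}) & a(w^{-1}) - i\,d(w^{-1}) \end{pmatrix}.
\end{align*}
On the unit circle $w^{-1} = \overline{w}$ and, since $a,b,c,d$ have real coefficients, $a(w^{-1}) = \overline{a(w)}$ and likewise for $b,c,d$; hence $M(w)$ has the form $\left(\begin{smallmatrix}\alpha & \beta \\ -\overline{\beta} & \overline{\alpha}\end{smallmatrix}\right)$ with $\alpha = f(w)$ and $\beta = c(w) + i\,b(w)$. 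Every such matrix lies in $SU(2)$ as soon as $|\alpha|^2 + |\beta|^2 = 1$ (its determinant is then automatically $1$). So the whole statement reduces to producing a single Laurent polynomial $g(w) = c(w) + i\,b(w)$ with $|g(w)|^2 = 1 - |f(w)|^2$ on $U(1)$; conversely, any such $g$ splits coefficientwise into the required real-coefficient polynomials $c$ (real parts) and $b$ (imaginary parts).

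Next I would verify that the target is a strictly positive trigonometric polynomial. On $U(1)$ one has $|f(w)|^2 = f(w)\overline{f(w)} = \bigl(a(w) + i\,d(w)\bigr)\bigl(a(w^{-1}) - i\,d(w^{-1})\bigr)$, which is a Laurent polynomial in $w$, real-valued on the circle — hence with conjugate-symmetric coefficients — and reciprocal with real coefficients by hypothesis. The assumption $|f|^2 < 1$ on $U(1)$ then makes $R(w) := 1 - |f(w)|^2$ strictly positive there.

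Then I would invoke the Fej\'er--Riesz (spectral factorization) theorem: a Laurent polynomial that is non-negative on the unit circle can be written as $R(w) = g(w)\,\overline{g(w)}$ on $U(1)$ for some Laurent polynomial $g$ — concretely, factor the ordinary-polynomial representative with all roots off the open disc, then absorb a monomial $w^{-m}$, which is unimodular on $U(1)$, to recenter it. Splitting this $g$ into $b$ and $c$ as above and reversing the matrix computation above shows that $a(\Tilde{w}) + b(\Tilde{w})\,iX + c(\Tilde{w})\,iY + d(\Tilde{w})\,iZ$ is $SU(2)$-valued at every point of $U(1)$. If in addition one wants $b,c$ to carry the definite parity of $(a,d)$ — useful for then feeding the result into Theorem~\ref{thm:haah-construction} — one observes that when $f$ has definite parity $R$ contains only even powers of $w$, so $R(w) = \widetilde{R}(w^2)$ and $g$ can be taken of the form $g(w) = w^{\varepsilon}\,\widetilde{g}(w^2)$; but the statement as phrased only asks for membership in $SU(2)$.

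The genuine mathematical content sits in the spectral factorization, which is classical, so the real care goes into the bookkeeping of the first step: getting the signs and the placement of the Pauli terms in $M(w)$ exactly right, so that one lands on $\beta = c + i\,b$ rather than $b + i\,c$ or a wrong-sign $Y$, and confirming that $1 - |f|^2$ is genuinely a Laurent polynomial that is real-valued (hence conjugate-symmetric) on the circle, so that Fej\'er--Riesz truly applies. I would double-check the entries of $M(w)$ against Haah's conventions before fixing the signs; I also note that the strict bound $|f|^2 < 1$ is more than needed for the mere existence of $g$ (non-negativity already suffices) but is convenient if one additionally wants $g$ to have no zeros on $U(1)$.
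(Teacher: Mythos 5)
Your proposal is correct and follows essentially the same route as the paper: both reduce the $SU(2)$ condition to finding an off‑diagonal Laurent polynomial whose squared modulus on $U(1)$ equals $1-|f|^2$, and both obtain it by factoring that strictly positive, reciprocal, real‑coefficient polynomial over its roots inside the unit disc (the paper carries out this Fej\'er--Riesz step by hand, writing $1-a^2-d^2=\alpha\,e(w)e(1/w)$ and checking $\alpha>0$ at $w=1$, where you cite it as a known theorem). The only cosmetic difference is how $b,c$ are extracted from the factor — you split $g$ coefficientwise into real and imaginary parts, while the paper takes the reciprocal and anti‑reciprocal combinations $\sqrt{\alpha}\,(e(w)\pm e(1/w))/2$ (resp.\ $/2i$) — and both choices satisfy the required identity.
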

Note that, with the given construction, $f(w)$ is present on the top-left corner of the matrix, while the bottom-right entry contains $(f(w^{-1}))^*$.
\begin{proof}
    On the unit circle, this translates to the condition
    \begin{align*}
        a^2(w) + b^2(w) + c^2(w) + d^2(w) \equiv 1.
    \end{align*}
    The polynomial $p(w) = 1 - a^2(w) - d^2(w)$ has degree $n' \le 2n$ (in Laurent polynomials leading terms could cancel out). Moreover, by the conditions on $|f|^2$, it has real coefficients, all its roots are outside the unit circle, and by reciprocity for any root $r$ also $1/r$ is a root. Therefore, using $\mathcal{D}$ to denote the multiset of roots within the unit circle, the expression becomes
    \begin{align}
        \label{eq:haah-construction-roots}
        p(t) = \alpha \prod_{r \in \mathcal{D}} (w - r) (1/w - r)
    \end{align}
    for some proportionality constant $\alpha$. Keep in mind that $|\mathcal{D}| = n'$. Let us now define
    \begin{align*}
        e(w) := w^{-\lfloor n'/2 \rfloor} \prod_{r \in \mathcal{D}} (w - r) \ .
    \end{align*}
    In this way, $\alpha \cdot e(w) \cdot e(1/w) = p(w)$. Note that the factor in front of the product is used to ‘center' the exponents of $e(w)$, so that its degree is $\lceil n'/2 \rceil \le n$. This however, does not affect the equality we just stated.
    Plugging $t = 1$ in Eq.\ (\ref{eq:haah-construction-roots}) gives real and positive expressions on both sides (complex roots cannot not give negative contributions, since they all come in complex conjugate pairs, by reality of the coefficients). Hence we conclude that $\alpha$ is positive.
    \begin{align*}
        p(t) & = \alpha \cdot e(w) \cdot e(1/w) \\
        & = \left( \frac{e(w) + e(1/w)}{2} \sqrt{\alpha} \right)^2 + \left( \frac{e(w) - e(1/w)}{2i} \sqrt{\alpha} \right)^2 \ ,
    \end{align*}
    and we can choose the two expressions in the tuples as $b(w), c(w)$, for example. They will have both degree $\le \lceil n'/2 \rceil \le n$.
\end{proof}
It is interesting to remark that the reciprocity of $|f(w)|^2$ implies the reciprocity of $a^2(w) + d^2(w)$. Under this assumption, we have that $a^2$ is reciprocal if and only if $d^2$ is reciprocal. If the square of a polynomial is reciprocal with real coefficients, then the original polynomial is either reciprocal itself with real coefficients (cosine transform), or anti-reciprocal with imaginary coefficients (sine transform). On the other hand, if $a^2$ is not reciprocal, then in order to maintain reciprocity of the sum we need $a^2(w) = d^2(1/w)$, or simply $a(w) = \pm d(1/w)$. In this last case, $a(w)$ and $d(w)$ do not need to have real nor imaginary coefficients as long as they are real-valued for any $w \in U(1)$. Also as pointed out by Haah, (anti-)reciprocity is not a severe restriction, since any function can be decomposed into a sum of reciprocal and anti-reciprocal components, which can be implemented separately and then summed using a circuit like in Figure~\ref{fig:block-encoding-sum-circuit}, incurring on at most a constant factor on the failure probability.
\section{Semi-oblivious amplitude amplification}
\label{apx:amplitude-amplification}

\noindent In Section~\ref{sec:phase-extraction-problem} we have the following unitary:
\begin{align*}
    S \ket{00} \ket{\psi}^{\otimes n} = \ket{00} H_c \ket{\psi} + \ket{\Phi}
\end{align*}
\smallskip

\noindent where $\braket{00}{\Phi} = 0$, and $H_c \ket{x} = \frac{1}{2} \sqrt{\Tilde{c}(x)} \ket{x}$. Unfortunately, we cannot apply an oblivious amplitude amplification~\cite{berryExponentialImprovementPrecision2014a} procedure, not even in its robust version~\cite{berrySimulatingHamiltonianDynamics2015}. This because we would need $H_c$ to be (almost-)unitary, but it can even have zero eigenvalues. On the other hand, however, we only need to transform one particular state, i.e.\ $\ket{+}^{\otimes n}$. We use the Quantum Singular Value Transform (QSVT)~\cite{gilyenQuantumSingularValue2019a, gilyenQuantumSingularValue2019c} in a similar way it was used in~\cite{martynGrandUnificationQuantum2021} for Grover's search. We introduce some notation: let $\ket{\Psi} = \ket{00} \ket{+}^{\otimes n}$ denote our initial state, while $\ket{w} = \ket{00} \frac{H_c \ket{+}^{\otimes n}}{||H_c \ket{+}^{\otimes n}||}$ is our (normalized) target state. If $\Pi' = \ketbra{00}{00} \otimes \id, \Pi = \ketbra{\Psi}{\Psi}$. Then
\begin{align*}
    \Pi' S \Pi \ket{\Psi} = \ket{00} H_c \ket{+}^{\otimes n} = \sigma \ket{w}
\end{align*}
where $\sigma = ||H_c \ket{+}^{\otimes n}|| = \Theta(\sqrt{\bar{c}})$. Since any state orthogonal to $\ket{\Psi}$ is canceled out by $\Pi$, then it means that a singular value decomposition is
\begin{align*}
    \Pi' S \Pi = \sigma \ketbra{w}{\Psi}.
\end{align*}
Notice that, in our case, reflections and rotations of $\Pi', \Pi$ can be easily implemented and we can use the polynomial that approximates $\erf(k[x - c])$ (see, e.g.\,~\cite[Section III]{martynGrandUnificationQuantum2021}), having constant success probability with only
$$\bigO\left(\frac{1}{\sigma}\right) = \bigO\left(\frac{1}{\sqrt{\bar{c}}}\right)$$
calls to $S$.

\bibliography{refs}

\begin{thebibliography}{23}%
\makeatletter
\providecommand \@ifxundefined [1]{%
 \@ifx{#1\undefined}
}%
\providecommand \@ifnum [1]{%
 \ifnum #1\expandafter \@firstoftwo
 \else \expandafter \@secondoftwo
 \fi
}%
\providecommand \@ifx [1]{%
 \ifx #1\expandafter \@firstoftwo
 \else \expandafter \@secondoftwo
 \fi
}%
\providecommand \natexlab [1]{#1}%
\providecommand \enquote  [1]{``#1''}%
\providecommand \bibnamefont  [1]{#1}%
\providecommand \bibfnamefont [1]{#1}%
\providecommand \citenamefont [1]{#1}%
\providecommand \href@noop [0]{\@secondoftwo}%
\providecommand \href [0]{\begingroup \@sanitize@url \@href}%
\providecommand \@href[1]{\@@startlink{#1}\@@href}%
\providecommand \@@href[1]{\endgroup#1\@@endlink}%
\providecommand \@sanitize@url [0]{\catcode `\\12\catcode `\$12\catcode
  `\&12\catcode `\#12\catcode `\^12\catcode `\_12\catcode `\%12\relax}%
\providecommand \@@startlink[1]{}%
\providecommand \@@endlink[0]{}%
\providecommand \url  [0]{\begingroup\@sanitize@url \@url }%
\providecommand \@url [1]{\endgroup\@href {#1}{\urlprefix }}%
\providecommand \urlprefix  [0]{URL }%
\providecommand \Eprint [0]{\href }%
\providecommand \doibase [0]{https://doi.org/}%
\providecommand \selectlanguage [0]{\@gobble}%
\providecommand \bibinfo  [0]{\@secondoftwo}%
\providecommand \bibfield  [0]{\@secondoftwo}%
\providecommand \translation [1]{[#1]}%
\providecommand \BibitemOpen [0]{}%
\providecommand \bibitemStop [0]{}%
\providecommand \bibitemNoStop [0]{.\EOS\space}%
\providecommand \EOS [0]{\spacefactor3000\relax}%
\providecommand \BibitemShut  [1]{\csname bibitem#1\endcsname}%
\let\auto@bib@innerbib\@empty
\bibitem [{\citenamefont {Low}\ \emph {et~al.}(2016)\citenamefont {Low},
  \citenamefont {Yoder},\ and\ \citenamefont
  {Chuang}}]{lowMethodologyResonantEquiangular2016}%
  \BibitemOpen
  \bibfield  {author} {\bibinfo {author} {\bibfnamefont {G.~H.}\ \bibnamefont
  {Low}}, \bibinfo {author} {\bibfnamefont {T.~J.}\ \bibnamefont {Yoder}},\
  and\ \bibinfo {author} {\bibfnamefont {I.~L.}\ \bibnamefont {Chuang}},\
  }\href {https://doi.org/10.1103/PhysRevX.6.041067} {\bibfield  {journal}
  {\bibinfo  {journal} {Physical Review X}\ }\textbf {\bibinfo {volume} {6}},\
  \bibinfo {pages} {041067} (\bibinfo {year} {2016})}\BibitemShut {NoStop}%
\bibitem [{\citenamefont {Low}\ and\ \citenamefont
  {Chuang}(2017{\natexlab{a}})}]{lowOptimalHamiltonianSimulation2017}%
  \BibitemOpen
  \bibfield  {author} {\bibinfo {author} {\bibfnamefont {G.~H.}\ \bibnamefont
  {Low}}\ and\ \bibinfo {author} {\bibfnamefont {I.~L.}\ \bibnamefont
  {Chuang}},\ }\href {https://doi.org/10.1103/PhysRevLett.118.010501}
  {\bibfield  {journal} {\bibinfo  {journal} {Physical Review Letters}\
  }\textbf {\bibinfo {volume} {118}},\ \bibinfo {pages} {010501} (\bibinfo
  {year} {2017}{\natexlab{a}})}\BibitemShut {NoStop}%
\bibitem [{\citenamefont {Low}\ and\ \citenamefont
  {Chuang}(2019)}]{lowHamiltonianSimulationQubitization2019}%
  \BibitemOpen
  \bibfield  {author} {\bibinfo {author} {\bibfnamefont {G.~H.}\ \bibnamefont
  {Low}}\ and\ \bibinfo {author} {\bibfnamefont {I.~L.}\ \bibnamefont
  {Chuang}},\ }\href {https://doi.org/10.22331/q-2019-07-12-163} {\bibfield
  {journal} {\bibinfo  {journal} {Quantum}\ }\textbf {\bibinfo {volume} {3}},\
  \bibinfo {pages} {163} (\bibinfo {year} {2019})},\ \Eprint
  {https://arxiv.org/abs/1610.06546} {arXiv:1610.06546 [quant-ph]} \BibitemShut
  {NoStop}%
\bibitem [{\citenamefont {Low}\ and\ \citenamefont
  {Chuang}(2017{\natexlab{b}})}]{lowHamiltonianSimulationUniform2017}%
  \BibitemOpen
  \bibfield  {author} {\bibinfo {author} {\bibfnamefont {G.~H.}\ \bibnamefont
  {Low}}\ and\ \bibinfo {author} {\bibfnamefont {I.~L.}\ \bibnamefont
  {Chuang}},\ }\href {https://doi.org/10.48550/arXiv.1707.05391} {\bibinfo
  {title} {Hamiltonian {{Simulation}} by {{Uniform Spectral Amplification}}}}
  (\bibinfo {year} {2017}{\natexlab{b}}),\ \Eprint
  {https://arxiv.org/abs/1707.05391} {arXiv:1707.05391 [quant-ph]} \BibitemShut
  {NoStop}%
\bibitem [{\citenamefont {Martyn}\ \emph {et~al.}(2022)\citenamefont {Martyn},
  \citenamefont {Liu}, \citenamefont {Chin},\ and\ \citenamefont
  {Chuang}}]{martynEfficientFullyCoherentQuantum2022}%
  \BibitemOpen
  \bibfield  {author} {\bibinfo {author} {\bibfnamefont {J.~M.}\ \bibnamefont
  {Martyn}}, \bibinfo {author} {\bibfnamefont {Y.}~\bibnamefont {Liu}},
  \bibinfo {author} {\bibfnamefont {Z.~E.}\ \bibnamefont {Chin}},\ and\
  \bibinfo {author} {\bibfnamefont {I.~L.}\ \bibnamefont {Chuang}},\
  }\href@noop {} {\bibinfo {title} {Efficient {{Fully-Coherent Quantum Signal
  Processing Algorithms}} for {{Real-Time Dynamics Simulation}}}} (\bibinfo
  {year} {2022}),\ \Eprint {https://arxiv.org/abs/2110.11327} {arXiv:2110.11327
  [quant-ph]} \BibitemShut {NoStop}%
\bibitem [{\citenamefont {Haah}(2019)}]{haahProductDecompositionPeriodic2019}%
  \BibitemOpen
  \bibfield  {author} {\bibinfo {author} {\bibfnamefont {J.}~\bibnamefont
  {Haah}},\ }\href {https://doi.org/10.22331/q-2019-10-07-190} {\bibfield
  {journal} {\bibinfo  {journal} {Quantum}\ }\textbf {\bibinfo {volume} {3}},\
  \bibinfo {pages} {190} (\bibinfo {year} {2019})},\ \Eprint
  {https://arxiv.org/abs/1806.10236} {arXiv:1806.10236 [quant-ph]} \BibitemShut
  {NoStop}%
\bibitem [{\citenamefont {Gily{\'e}n}\ \emph {et~al.}(2019)\citenamefont
  {Gily{\'e}n}, \citenamefont {Su}, \citenamefont {Low},\ and\ \citenamefont
  {Wiebe}}]{gilyenQuantumSingularValue2019a}%
  \BibitemOpen
  \bibfield  {author} {\bibinfo {author} {\bibfnamefont {A.}~\bibnamefont
  {Gily{\'e}n}}, \bibinfo {author} {\bibfnamefont {Y.}~\bibnamefont {Su}},
  \bibinfo {author} {\bibfnamefont {G.~H.}\ \bibnamefont {Low}},\ and\ \bibinfo
  {author} {\bibfnamefont {N.}~\bibnamefont {Wiebe}},\ }in\ \href
  {https://doi.org/10.1145/3313276.3316366} {\emph {\bibinfo {booktitle}
  {Proceedings of the 51st {{Annual ACM SIGACT Symposium}} on {{Theory}} of
  {{Computing}}}}}\ (\bibinfo {year} {2019})\ pp.\ \bibinfo {pages}
  {193--204},\ \Eprint {https://arxiv.org/abs/1806.01838} {arXiv:1806.01838
  [quant-ph]} \BibitemShut {NoStop}%
\bibitem [{\citenamefont {Gily{\'e}n}(2019)}]{gilyenQuantumSingularValue2019c}%
  \BibitemOpen
  \bibfield  {author} {\bibinfo {author} {\bibfnamefont {A.}~\bibnamefont
  {Gily{\'e}n}},\ }\emph {\bibinfo {title} {Quantum {{Singular Value
  Transformation}} \& {{Its Algorithmic Applications}}}},\ \href@noop {}
  {\bibinfo {type} {Doctoral}},\ \bibinfo  {school} {University of Amsterdam}
  (\bibinfo {year} {2019})\BibitemShut {NoStop}%
\bibitem [{\citenamefont {Grover}(1996)}]{groverFastQuantumMechanical1996a}%
  \BibitemOpen
  \bibfield  {author} {\bibinfo {author} {\bibfnamefont {L.~K.}\ \bibnamefont
  {Grover}},\ }\href {https://doi.org/10.48550/arXiv.quant-ph/9605043}
  {\bibinfo {title} {A fast quantum mechanical algorithm for database search}}
  (\bibinfo {year} {1996}),\ \Eprint {https://arxiv.org/abs/quant-ph/9605043}
  {arXiv:quant-ph/9605043} \BibitemShut {NoStop}%
\bibitem [{\citenamefont {Shor}(1997)}]{shorPolynomialTimeAlgorithmsPrime1997}%
  \BibitemOpen
  \bibfield  {author} {\bibinfo {author} {\bibfnamefont {P.~W.}\ \bibnamefont
  {Shor}},\ }\href {https://doi.org/10.1137/S0097539795293172} {\bibfield
  {journal} {\bibinfo  {journal} {SIAM Journal on Computing}\ }\textbf
  {\bibinfo {volume} {26}},\ \bibinfo {pages} {1484} (\bibinfo {year}
  {1997})},\ \Eprint {https://arxiv.org/abs/quant-ph/9508027}
  {arXiv:quant-ph/9508027} \BibitemShut {NoStop}%
\bibitem [{\citenamefont {Harrow}\ \emph {et~al.}(2009)\citenamefont {Harrow},
  \citenamefont {Hassidim},\ and\ \citenamefont
  {Lloyd}}]{harrowQuantumAlgorithmSolving2009}%
  \BibitemOpen
  \bibfield  {author} {\bibinfo {author} {\bibfnamefont {A.~W.}\ \bibnamefont
  {Harrow}}, \bibinfo {author} {\bibfnamefont {A.}~\bibnamefont {Hassidim}},\
  and\ \bibinfo {author} {\bibfnamefont {S.}~\bibnamefont {Lloyd}},\ }\href
  {https://doi.org/10.1103/PhysRevLett.103.150502} {\bibfield  {journal}
  {\bibinfo  {journal} {Physical Review Letters}\ }\textbf {\bibinfo {volume}
  {103}},\ \bibinfo {pages} {150502} (\bibinfo {year} {2009})},\ \Eprint
  {https://arxiv.org/abs/0811.3171} {arXiv:0811.3171 [quant-ph]} \BibitemShut
  {NoStop}%
\bibitem [{\citenamefont {Nielsen}\ and\ \citenamefont
  {Chuang}(2010)}]{nielsenQuantumComputationQuantum2010b}%
  \BibitemOpen
  \bibfield  {author} {\bibinfo {author} {\bibfnamefont {M.~A.}\ \bibnamefont
  {Nielsen}}\ and\ \bibinfo {author} {\bibfnamefont {I.~L.}\ \bibnamefont
  {Chuang}},\ }\href {https://doi.org/10.1017/CBO9780511976667} {\emph
  {\bibinfo {title} {Quantum {{Computation}} and {{Quantum Information}}: 10th
  {{Anniversary Edition}}}}}\ (\bibinfo  {publisher} {{Cambridge University
  Press}},\ \bibinfo {year} {2010})\BibitemShut {NoStop}%
\bibitem [{\citenamefont {Chao}\ \emph {et~al.}(2020)\citenamefont {Chao},
  \citenamefont {Ding}, \citenamefont {Gilyen}, \citenamefont {Huang},\ and\
  \citenamefont {Szegedy}}]{chaoFindingAnglesQuantum2020}%
  \BibitemOpen
  \bibfield  {author} {\bibinfo {author} {\bibfnamefont {R.}~\bibnamefont
  {Chao}}, \bibinfo {author} {\bibfnamefont {D.}~\bibnamefont {Ding}}, \bibinfo
  {author} {\bibfnamefont {A.}~\bibnamefont {Gilyen}}, \bibinfo {author}
  {\bibfnamefont {C.}~\bibnamefont {Huang}},\ and\ \bibinfo {author}
  {\bibfnamefont {M.}~\bibnamefont {Szegedy}},\ }\href@noop {} {\bibinfo
  {title} {Finding {{Angles}} for {{Quantum Signal Processing}} with {{Machine
  Precision}}}} (\bibinfo {year} {2020}),\ \Eprint
  {https://arxiv.org/abs/2003.02831} {arXiv:2003.02831 [quant-ph]} \BibitemShut
  {NoStop}%
\bibitem [{\citenamefont
  {Knill}(1995)}]{knillApproximationQuantumCircuits1995}%
  \BibitemOpen
  \bibfield  {author} {\bibinfo {author} {\bibfnamefont {E.}~\bibnamefont
  {Knill}},\ }\href {https://doi.org/10.48550/arXiv.quant-ph/9508006} {\bibinfo
  {title} {Approximation by {{Quantum Circuits}}}} (\bibinfo {year} {1995}),\
  \Eprint {https://arxiv.org/abs/quant-ph/9508006} {arXiv:quant-ph/9508006}
  \BibitemShut {NoStop}%
\bibitem [{\citenamefont {Mottonen}\ \emph {et~al.}(2004)\citenamefont
  {Mottonen}, \citenamefont {Vartiainen}, \citenamefont {Bergholm},\ and\
  \citenamefont {Salomaa}}]{mottonenTransformationQuantumStates2004a}%
  \BibitemOpen
  \bibfield  {author} {\bibinfo {author} {\bibfnamefont {M.}~\bibnamefont
  {Mottonen}}, \bibinfo {author} {\bibfnamefont {J.~J.}\ \bibnamefont
  {Vartiainen}}, \bibinfo {author} {\bibfnamefont {V.}~\bibnamefont
  {Bergholm}},\ and\ \bibinfo {author} {\bibfnamefont {M.~M.}\ \bibnamefont
  {Salomaa}},\ }\href {https://doi.org/10.48550/arXiv.quant-ph/0407010}
  {\bibinfo {title} {Transformation of quantum states using uniformly
  controlled rotations}} (\bibinfo {year} {2004}),\ \Eprint
  {https://arxiv.org/abs/quant-ph/0407010} {arXiv:quant-ph/0407010}
  \BibitemShut {NoStop}%
\bibitem [{\citenamefont {Araujo}\ \emph {et~al.}(2021)\citenamefont {Araujo},
  \citenamefont {Park}, \citenamefont {Petruccione},\ and\ \citenamefont {{da
  Silva}}}]{araujoDivideandconquerAlgorithmQuantum2021}%
  \BibitemOpen
  \bibfield  {author} {\bibinfo {author} {\bibfnamefont {I.~F.}\ \bibnamefont
  {Araujo}}, \bibinfo {author} {\bibfnamefont {D.~K.}\ \bibnamefont {Park}},
  \bibinfo {author} {\bibfnamefont {F.}~\bibnamefont {Petruccione}},\ and\
  \bibinfo {author} {\bibfnamefont {A.~J.}\ \bibnamefont {{da Silva}}},\ }\href
  {https://doi.org/10.1038/s41598-021-85474-1} {\bibfield  {journal} {\bibinfo
  {journal} {Scientific Reports}\ }\textbf {\bibinfo {volume} {11}},\ \bibinfo
  {pages} {6329} (\bibinfo {year} {2021})}\BibitemShut {NoStop}%
\bibitem [{\citenamefont {Zhang}\ \emph {et~al.}(2022)\citenamefont {Zhang},
  \citenamefont {Li},\ and\ \citenamefont
  {Yuan}}]{zhangQuantumStatePreparation2022}%
  \BibitemOpen
  \bibfield  {author} {\bibinfo {author} {\bibfnamefont {X.-M.}\ \bibnamefont
  {Zhang}}, \bibinfo {author} {\bibfnamefont {T.}~\bibnamefont {Li}},\ and\
  \bibinfo {author} {\bibfnamefont {X.}~\bibnamefont {Yuan}},\ }\href
  {https://doi.org/10.1103/PhysRevLett.129.230504} {\bibfield  {journal}
  {\bibinfo  {journal} {Physical Review Letters}\ }\textbf {\bibinfo {volume}
  {129}},\ \bibinfo {pages} {230504} (\bibinfo {year} {2022})},\ \Eprint
  {https://arxiv.org/abs/2201.11495} {arXiv:2201.11495 [quant-ph]} \BibitemShut
  {NoStop}%
\bibitem [{\citenamefont {Jackson}(1930)}]{jacksonTheoryApproximation1930a}%
  \BibitemOpen
  \bibfield  {author} {\bibinfo {author} {\bibfnamefont {D.}~\bibnamefont
  {Jackson}},\ }\href@noop {} {\emph {\bibinfo {title} {The {{Theory}} of
  {{Approximation}}}}},\ Colloquium {{Publications}}\ (\bibinfo  {publisher}
  {{American Mathematical Society}},\ \bibinfo {year} {1930})\BibitemShut
  {NoStop}%
\bibitem [{\citenamefont {Brassard}\ \emph {et~al.}(2002)\citenamefont
  {Brassard}, \citenamefont {Hoyer}, \citenamefont {Mosca},\ and\ \citenamefont
  {Tapp}}]{brassardQuantumAmplitudeAmplification2002}%
  \BibitemOpen
  \bibfield  {author} {\bibinfo {author} {\bibfnamefont {G.}~\bibnamefont
  {Brassard}}, \bibinfo {author} {\bibfnamefont {P.}~\bibnamefont {Hoyer}},
  \bibinfo {author} {\bibfnamefont {M.}~\bibnamefont {Mosca}},\ and\ \bibinfo
  {author} {\bibfnamefont {A.}~\bibnamefont {Tapp}}\ }(\bibinfo {year} {2002})\
  pp.\ \bibinfo {pages} {53--74},\ \Eprint
  {https://arxiv.org/abs/quant-ph/0005055} {arXiv:quant-ph/0005055}
  \BibitemShut {NoStop}%
\bibitem [{\citenamefont {Berry}\ \emph {et~al.}(2014)\citenamefont {Berry},
  \citenamefont {Childs}, \citenamefont {Cleve}, \citenamefont {Kothari},\ and\
  \citenamefont {Somma}}]{berryExponentialImprovementPrecision2014a}%
  \BibitemOpen
  \bibfield  {author} {\bibinfo {author} {\bibfnamefont {D.~W.}\ \bibnamefont
  {Berry}}, \bibinfo {author} {\bibfnamefont {A.~M.}\ \bibnamefont {Childs}},
  \bibinfo {author} {\bibfnamefont {R.}~\bibnamefont {Cleve}}, \bibinfo
  {author} {\bibfnamefont {R.}~\bibnamefont {Kothari}},\ and\ \bibinfo {author}
  {\bibfnamefont {R.~D.}\ \bibnamefont {Somma}},\ }in\ \href
  {https://doi.org/10.1145/2591796.2591854} {\emph {\bibinfo {booktitle}
  {Proceedings of the Forty-Sixth Annual {{ACM}} Symposium on {{Theory}} of
  Computing}}}\ (\bibinfo {year} {2014})\ pp.\ \bibinfo {pages} {283--292},\
  \Eprint {https://arxiv.org/abs/1312.1414} {arXiv:1312.1414 [quant-ph]}
  \BibitemShut {NoStop}%
\bibitem [{\citenamefont {Berry}\ \emph {et~al.}(2015)\citenamefont {Berry},
  \citenamefont {Childs}, \citenamefont {Cleve}, \citenamefont {Kothari},\ and\
  \citenamefont {Somma}}]{berrySimulatingHamiltonianDynamics2015}%
  \BibitemOpen
  \bibfield  {author} {\bibinfo {author} {\bibfnamefont {D.~W.}\ \bibnamefont
  {Berry}}, \bibinfo {author} {\bibfnamefont {A.~M.}\ \bibnamefont {Childs}},
  \bibinfo {author} {\bibfnamefont {R.}~\bibnamefont {Cleve}}, \bibinfo
  {author} {\bibfnamefont {R.}~\bibnamefont {Kothari}},\ and\ \bibinfo {author}
  {\bibfnamefont {R.~D.}\ \bibnamefont {Somma}},\ }\href
  {https://doi.org/10.1103/PhysRevLett.114.090502} {\bibfield  {journal}
  {\bibinfo  {journal} {Physical Review Letters}\ }\textbf {\bibinfo {volume}
  {114}},\ \bibinfo {pages} {090502} (\bibinfo {year} {2015})}\BibitemShut
  {NoStop}%
\bibitem [{\citenamefont {Levin}\ and\ \citenamefont
  {Peres}(2017)}]{levinMarkovChainsMixing2017}%
  \BibitemOpen
  \bibfield  {author} {\bibinfo {author} {\bibfnamefont {D.~A.}\ \bibnamefont
  {Levin}}\ and\ \bibinfo {author} {\bibfnamefont {Y.}~\bibnamefont {Peres}},\
  }\href@noop {} {\emph {\bibinfo {title} {Markov {{Chains}} and {{Mixing
  Times}}}}}\ (\bibinfo  {publisher} {{American Mathematical Soc.}},\ \bibinfo
  {year} {2017})\BibitemShut {NoStop}%
\bibitem [{\citenamefont {Martyn}\ \emph {et~al.}(2021)\citenamefont {Martyn},
  \citenamefont {Rossi}, \citenamefont {Tan},\ and\ \citenamefont
  {Chuang}}]{martynGrandUnificationQuantum2021}%
  \BibitemOpen
  \bibfield  {author} {\bibinfo {author} {\bibfnamefont {J.~M.}\ \bibnamefont
  {Martyn}}, \bibinfo {author} {\bibfnamefont {Z.~M.}\ \bibnamefont {Rossi}},
  \bibinfo {author} {\bibfnamefont {A.~K.}\ \bibnamefont {Tan}},\ and\ \bibinfo
  {author} {\bibfnamefont {I.~L.}\ \bibnamefont {Chuang}},\ }\href
  {https://doi.org/10.1103/PRXQuantum.2.040203} {\bibfield  {journal} {\bibinfo
   {journal} {PRX Quantum}\ }\textbf {\bibinfo {volume} {2}},\ \bibinfo {pages}
  {040203} (\bibinfo {year} {2021})},\ \Eprint
  {https://arxiv.org/abs/2105.02859} {arXiv:2105.02859 [quant-ph]} \BibitemShut
  {NoStop}%
\end{thebibliography}%

\end{document}